\newtheorem{theorem}{Theorem}
\newtheorem{example}{Example}
\newtheorem{remark}{Remark}
\newtheorem{definition}{Definition}
\title{Chain of Antichains: An Efficient and Secure Distributed Ledger Technology and Its Applications}
\author{Jinwook Lee \thanks{jl3539@drexel.edu}}
\affil{\small{Decision Sciences,  Drexel University, Philadelphia, PA 19104, United States}}
\author{Paul Moon Sub Choi \thanks{paul.choi@ewha.ac.kr}}
\affil{\small{Ewha School of Business, Seodaemun-gu, Seoul 03760, South Korea}}
\date{}
\begin{document}

\maketitle

\begin{abstract}

Since the inception of blockchain and Bitcoin (\cite{bitcoin}), a decentralized-distributed ledger system and its associated cryptocurrency, respectively, the world has witnessed a slew of newer adaptations and applications. Although the original distributed ledger technology (DLT) of blockchain is deemed secure and decentralized, the confirmation of transactions is inefficient by design. Recently adopted, directed acyclic graph (DAG)-based distributed ledgers validate transactions efficiently without the physically and environmentally costly building process of blocks (\cite{dagcoin}). However, centrally-controlled confirmation against the odds of multiple validation disqualifies the DAG as a decentralized-distributed ledger. In this regard, we introduce an innovative DLT by reconstructing a chain of antichains based on a given DAG-pool of transactions. Each antichain (box) contains distinct nodes whose approved transactions are recursively validated by subsequently augmenting nodes. The boxer node closes the box and keeps the hash of all transactions confirmed by the box-genesis node. Designation of boxers and box-geneses is conditionally randomized for decentralization. The boxes are serially concatenated with recursive confirmation (boxchain) without incurring the cost of box generation. Rewards (boxcoin) are paid to the contributing nodes of the ecosystem whose trust is built on the doubly-secure protocol of confirmation. A value-preserving medium of payment (boxdollar) is among numerous practical applications discussed herein.\\

\noindent
{\it  \bf Keywords:} {\it distributed ledger technology, decentralization, antichain, boxchain, blockchain, directed acyclic graph, consensus protocol, stablecoin, cryptocurrency} 

\end{abstract}

\section{Introduction and our motivation}

The original vision of the Internet is to construct a global system of interconnected computer networks. The Internet has revolutionized the modern society and prosaic activities -- we became constantly connected. 
Yet, even though we are in the era of shared economies with help of the Internet, the individual service providers and their central authorities are not as much mutually-beneficial as they can be. For example, the drivers of shared rides and the hosts of shared accommodation can be better-compensated with more decentralized discretion. The central authorities of those industries can streamline the ever-increasing costs of cyber-security by decentralizing their centralized systems with the advent of decentralized-distributed networks. Decentralization clearly synergizes all stakeholders of the emerging value chains of shared economies, regardless of the size and side of transactions. The conventional centralized mechanism can be decentralized using the distributed ledger technology (DLT), more colloquially referred as the blockchain technology.

Upon the creation of blockchain by \cite{bitcoin}, the DLT has gained dramatic attention over the last decade among developers, practitioners, and academics alike. The unprecedentedly multiplied market values of cryptocurrencies, including blockchain-associated Bitcoin, have inspired a bevy of practitioners and researchers to implement a variety of real-life applications on distributed peer-to-peer (P2P) network systems. The original blockchain is considered decentralized and secure with recursively serial confirmation. However, the environmentally and physically arduous process of block building (``mining'') is the root cause of systemic inefficiency. This causes ramification (``forking'') of cryptocurrencies beyond Bitcoin. Recently, directed acyclic graph (DAG)-based distributed ledgers -- including Hashgraph (\cite{baird}), IOTA (\cite{tangle}), Byteball (\cite{byteball}), etc. -- have emerged as a new generation of DLTs. A DAG bears the shape of an irregularly tangled leaf with recursively validating, binomial sub-trees, and it is computationally efficient without mining. However, a DAG-based DLT typically requires centrally-managed nodes to perform the final confirmation of transactions and, to the worse, suffers from the odds of double spending.

Our approach to DLT is synthetic: Combine the decentralized and secure features of blockchain with the efficiency of DAGs. An antichain (``box'') is a set of distinct, unrelated elements (``nodes'') that orthogonally complements a chain. As a DAG evolves in the original ledger, we synchronously reconstruct a dual ledger with a chain of antichains (``boxchain'') as a serial concatenation of boxes of nodes that originate from the underlying DAG. Because a node in the DAG-ledger recursively approves up to two transactions of two previous nodes, a reflecting box in the dual ledger, in effect, recursively confirms the transactions of the last box. 
This doubly-secure consensus protocol is as follows: First, a new node in a box immediately verifies and validates the approved transactions of the previous node in the box. Second, when the last node in the box (``boxer,'' a light node\footnote{A lightweight node (or light client) only references the full node's copy with less required memory capacity and processing power.}) is determined the head node of the box (``box-genesis,'' a full node\footnote{Since the mining process (i.e., solving a partial hash inversion problem) is not required in this system, prohibitive computing equipment is not necessary for the full node, which only requires sufficient data storage capacity.}) is randomly chosen among good-standing nodes other than the boxer. Third, the box-genesis finally confirms all validated transactions within the box and keeps the hash of confirmed transactions, followed by timestamp synchronization with all the previous box-geneses. This means that the box-geneses must have a copy of the entire network. Such timestamp synchronization also takes place to boxers, which only keeps  hash pointers.
At the core of trust, or governance, on the boxchain ecosystem, this consensus protocol is numerically impossible for a malignant intention of delayed or multiply-validated transactions. While the original DAG-ledger and the dual ledger synchronize and expand, there is no mining game involved contributing to the efficiency of the system. In sum, our dual ledger-keeping algorithm is decentralized, efficient and secure.

Although currently available P2P payment systems appear more facile and expeditious than before, they can improve significantly with embracement of DLT as both share similarities in chains of digital signatures of asset transfers and P2P networking, etc. In this sense, e-commerce marketplaces can be an ideal place to implement the idea and application of DLT. However, a cryptocurrency with volatile market value is not suitable for usage in payment, and in order for it to serve as a medium of exchange or a store of value, its value has to be stable or pegged against a fiat currency. In these regards, our DLT provides two distinct cryptocurrencies: boxdollar, a value-preserving medium of payment or store of value (``stablecoin''); and boxcoin, a crypto asset. Such a dual-currency system is desirable in sustaining a decentralized-distributed ledger network with proper incentives as well as expanding the ecosystem for diverse usages of stablecoins. In case of international payments for merchandise purchases, first and foremost is the highest reliability and stability -- a proper digital currency must be able to function like a fiat money as a useful medium of exchange and a dependable store of value. Thus, pegging boxdollar to a fiat money (e.g., the U.S. dollar) is straightforward and intuitive. In order to maintain the reliability and stability of boxdollar, the optimized rebalancing of currency holdings per various criteria, e.g., \cite{kataoka} ``safety first model'' or conditional value-at-risk model, can be considered. Besides, boxdollar can be used as a local currency in a municipality with stable tax revenues. As the local governments with local stablecoins mutually agree to render them compatible, these value-preserving cryptocurrencies may eventually act as quasi-fiat currencies.\footnote{See Section \ref{bxd} for the discussions and models of boxdollar.}

While we need a stablecoin like boxdollar to finance purchases and various transactions, we also view the distributed ledger system as a society of healthy extent of ``greed.'' It is not difficult to see that concerns may be overdone if a society only consists of good-standing, law-abiding participants. However, tantamount rewards paid out in boxcoin, a crypto asset, are essential for the peers with good-standing track records. In order to encourage rewarding behaviors, a fair opportunity will be given to every peer to play a vital role in the system. In our ecosystem, there are two particular positions -- the boxer and the box-genesis -- to which any qualified participants can be designated multiple times. Besides, small fees will be charged to issued transactions to provide more security to the system. Without fees, there can be nonsensical transactions burdening the efficiency and security of the system. The amount of fees will be from negligible to small, depending on the matter.\footnote{See Section \ref{incentivesystem} for more details of the incentive system.}  The mentioned elements such as rewards and fees must be realized instantaneously upon occurrence of events -- giving prompt incentive and motivation to the peers. When the ecosystem gets larger, more robust and multiplied, it is obvious that the value of its currency will gain given a fixed quantity of currency in circulation. This possibility of capital appreciation is a great incentive for the peers to abide by the code of conduct. Given these accounts, we are motivated to build a new distributed peer-to-peer system with two distinct cryptocurrencies: boxdollar and boxcoin. Boxdollar is a stablecoin for our daily use while boxcoin is a crypto asset to improve and manage our ecosystem along with all participants' active engagement.

The organization of this research is as follows: Section \ref{box} introduces our new idea of the dual ledger-keeping algorithm, which is based on decomposition of partially ordered sets (DAGs). Based on the primal layer of a DAG-based ledger, a dual-ledger of boxchain (chain of antichains) is synchronized in the dual layer. A comparison with existing DLTs and applications is provided. In Section \ref{stoc} we present the mathematics of our DLT: the nonhomogeneous Poisson process, combination of multiple transaction flows, and the compound Poisson distribution. Section \ref{bxd} provides the models of boxdollar and discusses its potentials in the value chain, followed by our concluding remarks in Section \ref{bond}.

\section{The dual ledger-keeping algorithm for constructing an effective and secure distributed ledger}\label{box}
\subsection{DAG: distributed system, differences from blockchain, and previous works}\label{2.1}

A DAG, or acyclic digraph, consists of two entities: vertices and edges, i.e., $\mathcal{G} = (\mathcal{V},\mathcal{E})$, which consists of a set $\mathcal{V}$ of vertices (or nodes) and a set $\mathcal{E}$ of edges (or directed arcs).
An edge is an ordered pair $(i, j)$ meaning outgoing from vertex $i$, incoming to vertex $j$, a pair $(j, i)$ means from vertex $j$ to vertex $i$.
A DAG is deemed an generalized blockchain because the graph itself is a ledger for storing transactions where each vertex is intinsically a single block. 
Note that in this section we use the term DAG and $\mathcal{G}$ interchangeably. Besides, the terms ``vertex'' and ``node'' are equivalent -- ``vertex'' will be used when we are more focused on graph itself and ``node'' when transaction processes and their related topics are discussed. Some of the notable differences of a DAG from the original blockchain ideas are as follows:
\begin{itemize}
\item The main structure of a blockchain is a single chain that consists of blocks (a chain of blocks) and each block is a set of multiple transactions.
If we look at each transaction as a single block, it is not difficult to understand that such blocks and their connections can be depicted as a DAG. 
A DAG can also be expressed in terms of trees -- note that if there are $N$ vertices (or nodes), then there can be {\it at least} $N$ binary trees (see Section \ref{boxblock}).

\item Blockchain is based on synchronous time stamps, while a DAG is built asynchronously operating. However, our dual-ledger system is updated in a nearly synchronous manner (see Section \ref{tip2}.).

\item In a DAG there is none to force to separate participants into different categories. On the other hand, Bitcoin and some other cryptocurrencies justify to maintain two separate types of participants, required either to issue or to approve transactions. There are no ``miners'' creating blocks and receiving rewards in DAG.
\end{itemize}

Let us briefly mention how a DAG system works as a distributed ledger. Transactions are issued by nodes and their edges show what previous transactions they approved - users need to approve previous transactions in order to issue their own transaction. All transactions have its own weight, and in our case they are all equal to one which is adaptable as the system evolves. Speaking of the approval and the weights, the validation comes with cumulative addition of weights and it goes to all the connected previous nodes (i.e., ancestors of the node). 
 Unapproved nodes in a DAG are called ``tips.'' See Figure \ref{fig0111}.
 
 A transaction validation process is simple -- A node selects some tips at random and verify their validity in order to issue its own transaction. It is assumed that the approving nodes checks whether or not the two connected transactions are conflicting in a diligent and honest manner. Note that this validation is different from the final confirmation in terms of consensus mechanism. 
 A node even before the first transaction is called the ``genesis.'' The genesis node has all tokens created in the beginning of the DAG (no more tokens will be created) and then send the tokens out to several founder nodes. In order to issue a transaction a node needs to approve the two tips and solve a cryptographic puzzle (i.e., a hash inversion problem) similar to those in the Bitcoin blockchain. For more technical details we refer the reader to \cite{byteball},  \cite{iota}, \cite{dagcoin}, \cite{dag}, etc.  \cite{byteball} and \cite{tangle}  investigate many important deterministic and stochastic aspects of the network flows on DAGs, respectively.. We will discuss both deterministic and stochastic aspects here for improvements of the framework.

\begin{figure}[h!]
\begin{center}
\includegraphics[width=0.99\textwidth]{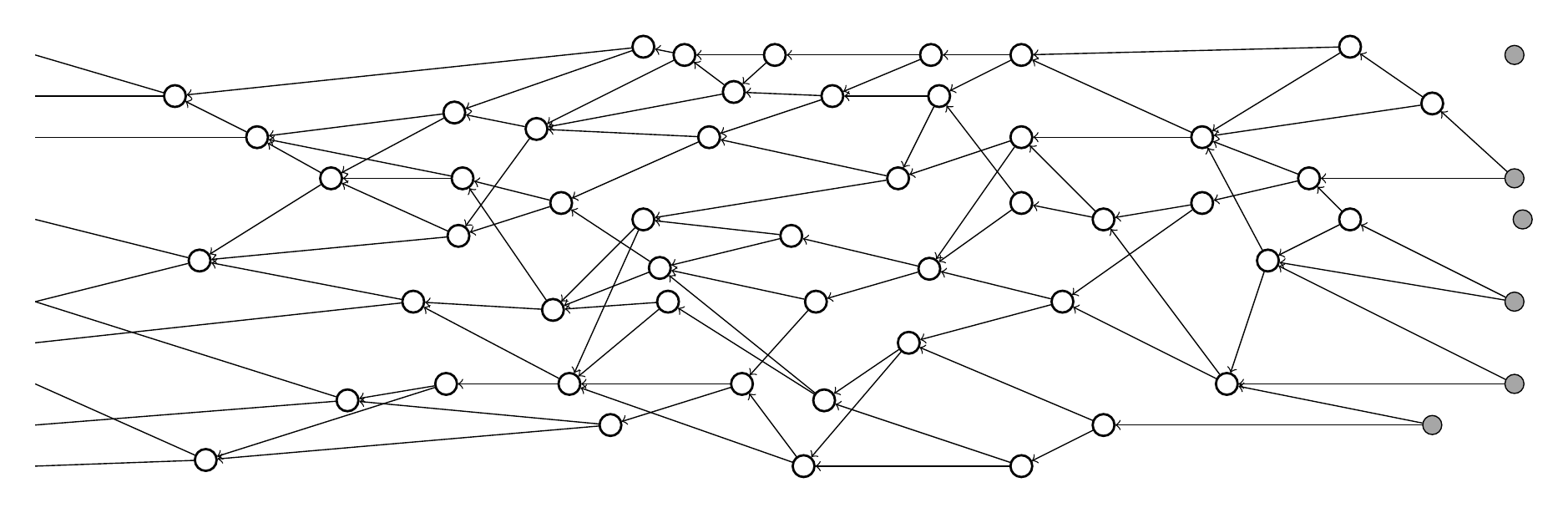}
\end{center}
\caption{Transactions on a DAG. Unapproved, gray nodes are tips.}\label{fig0111}
         \end{figure}

\subsection{DAG, structure of trees and related set representation}\label{boxblock}

We continue to develop and modify the existing cryptocurrency models on the DAG, and more importantly, how we make the connection between the blockchain and a DAG. We will begin with an introduction of basic notions and develop such concepts using the system of distinct representatives, i.e., decomposition of a partially ordered set (poset). A simultaneous creation of an efficient network layer (called the dual-layer) will follow, and this is the core of our dual-ledger system.

Our space is a DAG $\mathcal{G}$, where vertices (or nodes) designate transactions and edges (or directed arcs) denote how they are connected -- the edge $(i,j)$ means that node $i$ approves node $j$. In our network system a node selects exactly ``two'' tips at random and verify their validity in order to issue its own transaction. More details are presented after this section. There is another notation we need to define together with $\mathcal{G}$. $\mathcal{N}$ designates the network associated with $\mathcal{G},$ which consists of sub-networks $\mathcal{N} = \{ \mathcal{N}_1, \dots, \mathcal{N}_V \},$ where $V = |  \mathcal{V} |.$ Thus, each network contains a subset of $\mathcal{G}$ where a large number of vertices can be included. However complicated these networks look, there are always starting and ending points, and these are the latest vertex (the source node) and the genesis (the sink node), respectively.

A DAG is a combination of trees which consist of child nodes and their parent and ancestor nodes. 
For a tree, {\it Parent} means the predecessor of a node; {\it Child} is any successor of a node; {\it Siblings} are a pair of nodes that have the same parent; {\it Ancestor} is the set of predecessor; {\it Descendant} is the set of successors; {\it Subtree} denotes a node with its descendants. 

\begin{remark}[Binary Trees]
As a node is supposed to approve two previous transactions, it is a binary tree. It is easy to observe that the whole system can be separated into multiple binary trees. (There at least $V$ binary trees if the total number of vertices is denoted by $V$.)
\end{remark}

In Section \ref{2.1} we briefly mentioned the basic mechanics of issuing and approving transactions in a DAG. 
In connection with trees, if a node $v$ approves two previous transactions then such nodes that issued the approved transactions become parent-nodes (or parents) of node $v$. Note that this validation of the parents automatically approves the parents of that parents, followed by such recursion all the way to the genesis. If we select a single parent node out of two parent nodes then it forms a chain, i.e, a linearly linked, totally ordered set. This is very efficient and is regarded as one of the main benefits to manage transactions in a DAG.

However, there are a typically large number of such chains, which are only partially ordered as there are incomparable ancestors in terms of subset relation,  or validation relation. A DAG is a poset itself. The  multiple chains of a DAG is the main reason why the system is asynchronous, which makes it almost impossible for all peers to agree to a single version of the truth. This means the final confirmation from a reasonable consensus mechanism would be unreachable if the network system is solely based in a DAG. This motivates our dual ledger-keeping algorithm.
 
\begin{remark}
A DAG is a poset by subset relation. 
\end{remark}

Now let us turn our attention to see what happens to a DAG when a node is approved in terms of the cumulative weight.

\begin{remark}[Cumulative weight on a node, a counting measure of integrity and trust]
As mentioned earlier, we assume all nodes have their own weight equal to one. In addition to the references to the parents (by hashes), the cumulative weight is a good measure for the integrity and trust.
By the validation of a single child node, its all ancestors' weights will be added exactly by one, which is convenient as a counting measure. 
In other words, the weight of a node provides the system with information of the number of child nodes at the moment. The exact cumulative weight of a node is not known to its neighbors due to the asynchronous nature of a DAG-based network. 
\end{remark}

\begin{remark}[Inclusion of transaction information on nodes]
Note that a child node encompasses its parent node by referencing its parent's hash. 
All information of ancestors (including the genesis) can be obtained (reference does not take any memory, only point to the location) in a recursive manner, which means that a child node has more information than its parent nodes. 
\end{remark}

Speaking of the information inclusion by nodes let us use the following notations: $$T_k = \{ \text{transaction information of node $i_k$}\}.$$
 Suppose that node $i_7$ approved node $i_6$ and node $i_5$, and node $i_6$ approved node $i_4$ and $i_3$, and node $i_5$ approved nodes $i_2$ and $i_1$ as in Figure \ref{fig01111}. Then node $i_7$ can get the information of all 6 previous nodes and itself (of course), node $i_6$ includes nodes $i_4$, $i_3$ and itself, and node $i_5$ includes nodes $i_2$, $i_1$ and itself.
For $k=1, \dots, 7$, let the set $A_k$ designate the union of all ancestor nodes from the point of view of node $i_k$.
This example can be written up as:
$$A_7 = T_1 \cup T_2 \cup \dots \cup  T_7,$$

$$A_6 = T_6 \cup T_4 \cup T_3, \ \ A_5 = T_5 \cup T_2 \cup T_1$$
and
$$A_1 = T_1, A_2 = T_2, A_3 =  T_3, A_4 = T_4.$$
Thus, we have 
 $$A_7 =  T_7 \cup A_6 \cup A_5,$$ followed by the subset relation
 \begin{equation}\label{7sets}
A_7 \supseteq A_6, A_7 \supseteq A_5, A_6 \supseteq A_4, A_6 \supseteq A_3, A_5 \supseteq A_2,  A_5 \supseteq A_1,
\end{equation}
$A_5 \nsupseteq A_6$ and $A_5 \nsubseteq A_6$ as they are only partially ordered. This is a simplistic example, but illustrates some important ordering on components of the network. Refer to Figure \ref{bitree}, where child and parent nodes are upside down. In a tree graph, it is typical to have child nodes below, but in our special validation operation child nodes approve the transactions of their parent nodes. Again, the direction of edges are not from the genesis, but towards it. 

\begin{figure}[h!]
\begin{center}
\includegraphics[width=0.99\textwidth]{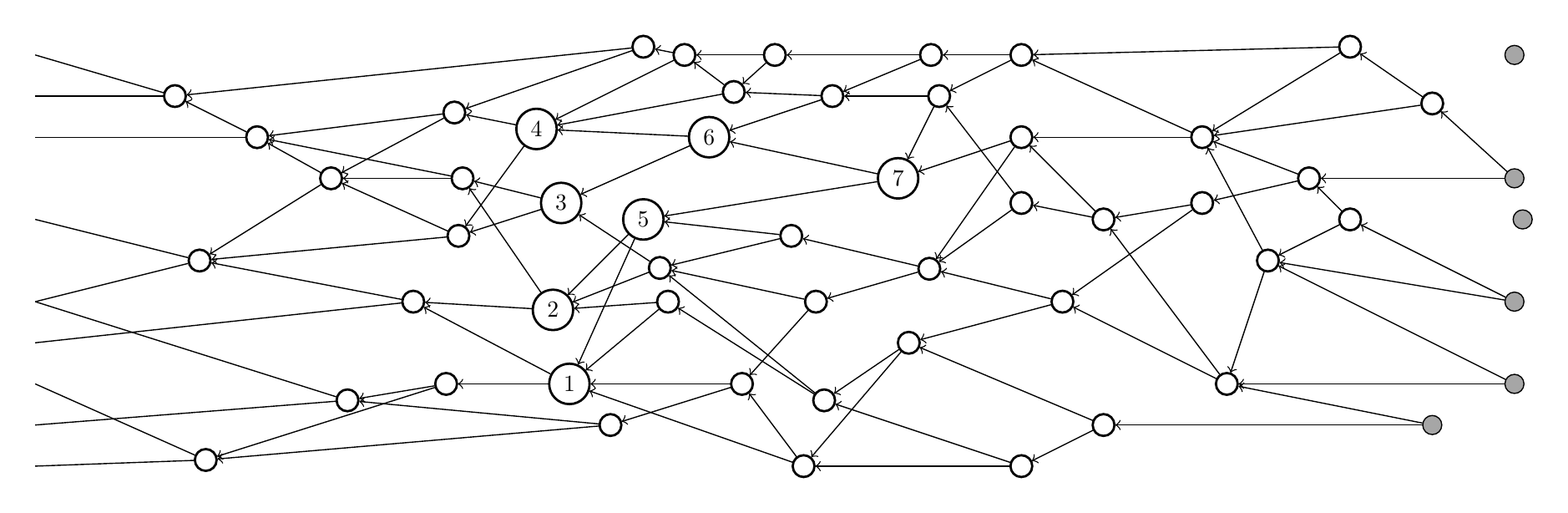}
\end{center}
\caption{Transactions on a DAG. Unappoved, gray nodes are tips. (Compare it with Figure \ref{bitree}.)}\label{fig01111}
         \end{figure}

\begin{figure}[ht!]{\centering
         \subfigure[Hasse diagram on partial order by subset relation.]{\includegraphics[width=0.35\textwidth]{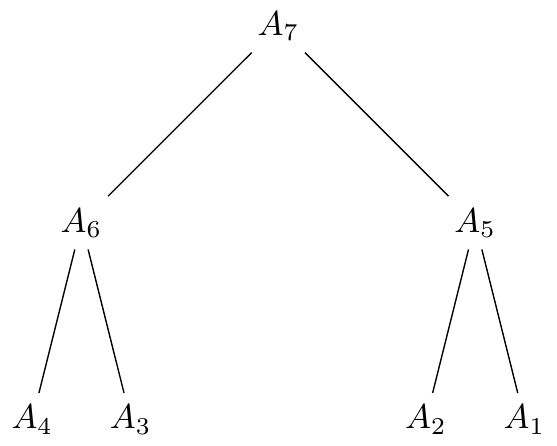}}\qquad
         \subfigure[A set of binary trees with referencing hash functions.]{\includegraphics[width=0.35\textwidth]{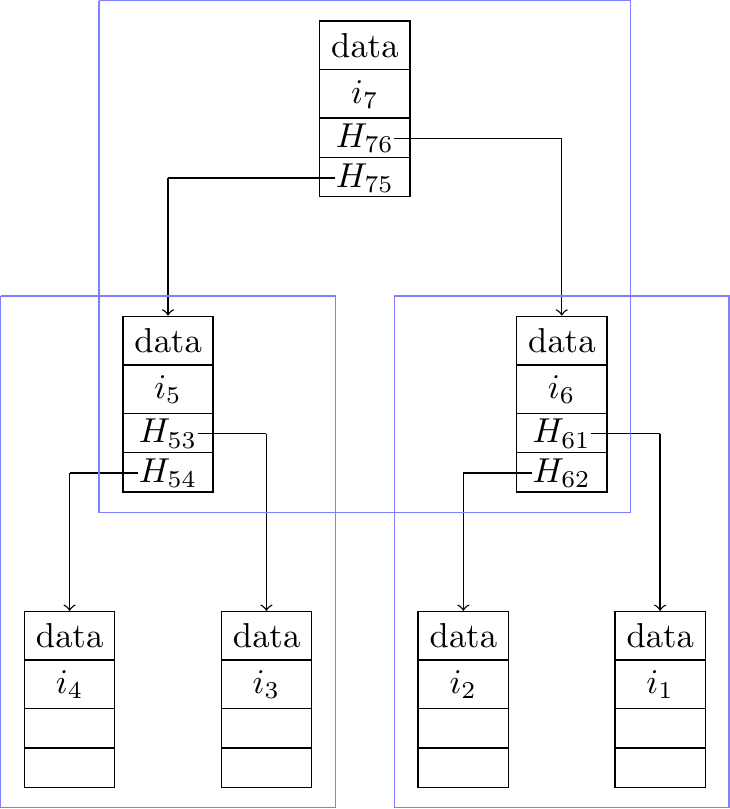}}
       	  \caption{Inclusion of information in a subset of a DAG of Figure \ref{fig01111}}  \label{bitree} }
         \end{figure}

\begin{remark}[Not a typical binary tree structure]
Again, note that the hierarchical relationship in a typical tree has ancestors above their predecessors. Our case is the opposite due to the selection-approval operation by a child node. This means, the root of a tree is a common child (the youngest ``single'' one!) of all the others. A child is a ``superset'' of the parent nodes.
\end{remark}

Desides, the tree in Figure \ref{bitree} can be divided into three binary trees (i.e., repeated structural forms as illustrated), which will help solve a variety of problems in a recursive or iterative manner. We employ such recursion for our consensus protocol (in Section \ref{consensusprotocol}).

\subsection{Primal and dual spaces,  and the dual ledger-keeping system}\label{det}
\subsubsection{Partially ordered set and its related concepts}
In this section we present models and related algorithms  making the system more integral, secure and efficient.
We provide some basic definitions and results in connection with posets. 

We say that two elements $x$ and $y$ of $S$ are {\it comparable} if $x \leq y$ or $y \leq x$, otherwise $x$ and $y$ are {\it incomparable}.
If all pairs of elements are comparable then $S$ is {\it totally ordered} with respect to $\leq$.
An element $M$ of $S$ is called a {\it maximal} element in $S$ if there exist no $x \in S$ such that $M \leq x$ (i.e., $M \leq x \rightarrow M =x$). 
An element $m$ of $S$ is called a {\it minimal} element in $S$ if there exist no $y \in S$ such that $y \leq m$ (i.e., $y \leq m \rightarrow m = y$).
An element $G$ of $S$ is called a {\it greatest} element in $S$ if $x \leq G$ for all elements $x$ of $S$, and the term {\it least} element is defined dually.
A {\it chain} (or totally ordered set or linearly ordered set) in a poset $S$ is a subset $C \subseteq S$ such that any two elements in $C$ are comparable (A chain is a sequence).
An {\it antichain} in a poset $S$ is a subset $A \subseteq S$ such that ``no'' two elements in $A$ are comparable. 

\begin{remark}[Blockchain is serial, thus totally ordered, while a DAG is only partially ordered.] 
The main structure of a blockchain is a series of blocks -- it is a serial concatenation of blocks, hence a totally ordered set. A DAG is only partially ordered and a mixture of piece-wise serial networks, seemingly disorganized. 
A DAG can also be restructured in a meaningful way by decomposing a poset into chains and antichains.
\end{remark}

It is known that if every chain of a poset $S$ has an upper bound in $S$, then $S$ contains at least one maximal element (Zorn's lemma).
If $x, y \in S$, then we say that $y$ {\it covers} $x$ or $x$ is covered by $y$, denoted $x \lessdot y$ or $y  \gtrdot x$, if $x < y$ and no element $u \in S$ satisfies $x < u < y.$
The chain $C$ of $S$ is maximal if it is not contained in a larger chain of $S$.

\begin{remark}\label{cover}
It is desirable that all maximal chains have the same length from the genesis to the corresponding tips. Typically some of the tips are not in the same antichain. 
\end{remark}

The size of the largest antichain is known as the poset's width; The size of the longest chain is known as the poset's height. This will be used in (\ref{nm}).
\begin{remark}
The height is from the genesis to a tip in the most recent antichain. 
\end{remark}

A poset can be partitioned using chains (see, e.g.,  \cite{poset}, \cite{poset1}, \cite{poset2}, etc.).
A poset can also be partitioned using antichains which is closely related to our formula construction as shown in \cite{poset3}. 
The following theorem is considered self-evident and is presented without the proof.
\begin{theorem}[Dual of Dilworth] \label{partition} 
Suppose that the largest chain in a poset $S$ has length $n$. Then $S$ can be partitioned into $n$ antichains. 
\end{theorem}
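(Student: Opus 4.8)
The plan is to prove the statement by the standard height-function argument (Mirsky's theorem), constructing the partition explicitly rather than trying to derive it in one line from Theorem~\ref{partition}'s primal counterpart (the two are dual but not formally interderivable without work). First I would define, for each element $x \in S$, the quantity $h(x)$ to be the number of elements in a longest chain of $S$ whose greatest element is $x$. Because every chain of $S$ has at most $n$ elements by hypothesis, $h$ is well-defined and takes values in $\{1, 2, \dots, n\}$; it is at least $1$ since the singleton $\{x\}$ is itself a chain ending at $x$.

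The key step is to show that each level set $A_k := h^{-1}(k) = \{ x \in S : h(x) = k \}$ is an antichain. Suppose, toward a contradiction, that two comparable elements $x, y \in A_k$ exist, say $x < y$. Take a longest chain $C$ ending at $x$; it has exactly $k$ elements by the definition of $h(x)$. Appending $y$ to $C$ yields a chain ending at $y$ with $k+1$ elements, forcing $h(y) \geq k+1 > k$ and contradicting $y \in A_k$. Hence no two elements of $A_k$ are comparable, so $A_k$ is an antichain.

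It then remains to count the blocks. The sets $A_1, \dots, A_n$ are pairwise disjoint, since $h$ is a function, and they cover $S$, so they furnish a partition of $S$ into at most $n$ antichains. To see that exactly $n$ nonempty blocks arise, matching the statement, I would fix a longest chain $x_1 < x_2 < \cdots < x_n$ and observe that $h(x_i) = i$ for each $i$: the truncation $x_1 < \cdots < x_i$ shows $h(x_i) \geq i$, while $h(x_i) > i$ would yield, by appending $x_{i+1}, \dots, x_n$, a chain of more than $n$ elements, contradicting the hypothesis. Thus every value in $\{1, \dots, n\}$ is attained and all $n$ blocks are nonempty.

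The main obstacle --- really the only delicate point --- is the well-definedness of $h$: the claim that a longest chain ending at each $x$ exists and has finite length is exactly where the hypothesis that the largest chain has length $n$ does its work. Once the uniform bound $n$ on chain lengths is available, the maximum defining $h(x)$ is taken over a nonempty set of integers bounded by $n$ and is therefore attained, and the remainder of the proof is the short contradiction above. Should one wish to avoid even this, one could define $h(x)$ as the supremum of chain lengths ending at $x$ and note directly that it is a finite integer with $h(x) \leq n$.
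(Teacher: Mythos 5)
Your proof is correct, but there is nothing in the paper to compare it against: the paper explicitly declares Theorem~\ref{partition} ``self-evident'' and presents it without proof. What you have written is the standard height-function argument for Mirsky's theorem, and it is complete -- the definition of $h(x)$ as the maximum number of elements in a chain whose greatest element is $x$, the contradiction showing each level set $h^{-1}(k)$ is an antichain, and the use of a maximum chain $x_1 < \cdots < x_n$ to show every level $1,\dots,n$ is actually attained are all sound, and your attention to the well-definedness of $h$ (the maximum is over a nonempty set of integers bounded by $n$, hence attained, even for infinite posets) is exactly the right point to worry about. Two minor remarks. First, your reading of ``length'' as the number of elements of the chain is the one consistent with the paper, which elsewhere identifies the height of a poset with the \emph{size} of its longest chain; under the alternative convention (length $=$ number of covering steps) the count would be $n+1$ antichains, so it is worth being explicit, as you were. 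Second, your closing observation that this result and Dilworth's theorem proper are dual ``but not formally interderivable without work'' is apt: the genuinely hard direction of Dilworth (partition into antichains bounds chains is trivial; the converse needs K\H{o}nig or an induction) has no analogue here, which is presumably why the paper felt entitled to skip the proof -- but skipping it and giving the three-line Mirsky argument are not the same thing, and your version is the one a careful reader would want.
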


\begin{definition}\label{rankfunction} A rank function of a poset $S$ is function $r: S \rightarrow \{0\} \cup \mathbb{N}$ having the following properties:
\begin{itemize}
\item [(i)] if $s$ is minimal, then $r(s) = 0.$
\item [(ii)] if $t$ covers $s$ (i.e., $t \gtrdot s$), then $r(t) = r(s) + 1.$ 
\end{itemize}
\end{definition}

\begin{remark} 
In a DAG, the minimal element is the genesis. 
\end{remark}

\begin{definition}[\cite{union1}]\label{def10} On a finite poset $S$ with length $n$, ordered by inclusion, the reverse rank function $\rho:S \rightarrow \{1, \dots, n\}$ is defined  by
$$\rho(E) = \sum_{i}   \mathbbm{1}_{E \subseteq M_i},$$
where $E$ is any element of $S$ and $M_i$'s are incomparable maximal elements of $S$. The reverse rank function $\rho(E)$, a counting measure, returns the number of maximal elements containing $E$.  
\end{definition}

\begin{remark}
In a DAG, tips are the maximal elements.
\end{remark}

\begin{theorem}
Given a finite poset $(S, \subseteq)$, $\max_{E \in S} \rho(E)$ is the width (i.e, the size of the largest anti-chain) of a poset and equals to 
$$\min \left\{m \ \big| \text{ maximal chains }C_1, \dots, C_m \text { with } S = \bigcup_{i=1}^m C_i \right\}.$$  
\end{theorem}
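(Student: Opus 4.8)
The plan is to split the statement into its two equalities and handle them separately: first the left equality $\max_{E\in S}\rho(E) = w$, where I write $w$ for the width of $(S,\subseteq)$, and then the right equality $w = \min\{m : S = \bigcup_{i=1}^m C_i,\ C_i \text{ maximal chains}\}$. The second is essentially a packaging of the classical Dilworth theorem, while the first is where the arithmetic of the reverse rank function $\rho$ of Definition~\ref{def10} enters.

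For the right equality I would invoke Dilworth's theorem in its chain-cover form: the minimum number of chains whose union is $S$ equals the maximum size of an antichain, i.e.\ $w$. To pass from arbitrary chains to \emph{maximal} chains, I would use that $S$ is finite, so every chain in a minimum cover extends to a maximal chain; the extended family still covers $S$ and has the same cardinality, giving $\min\{m:\ldots\}\le w$. Conversely a maximal chain is a chain, so any cover by maximal chains is in particular a chain cover and cannot use fewer than $w$ members; hence equality. This part is routine once Dilworth is granted (note that the excerpt records only the antichain-partition dual in Theorem~\ref{partition}, so the primal Dilworth theorem must be cited separately).

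For the left equality I would argue by two inequalities. The easy direction, $\max_E\rho(E)\le w$, is immediate: for a fixed $E$ the maximal elements $M_i$ with $E\subseteq M_i$ are pairwise incomparable (any two maximal elements are), so they form an antichain, whence $\rho(E)=\#\{i: E\subseteq M_i\}\le w$; taking the maximum over $E$ preserves the bound. For the reverse direction I would exhibit a single element lying below a maximum-size family of maximal elements. The natural witness is the genesis, the unique minimal element of $(S,\subseteq)$ identified in the remark after Definition~\ref{rankfunction}: since the genesis is contained in every element, it sits below \emph{all} tips, so $\rho(\text{genesis})$ counts every maximal element, and therefore $\max_E\rho(E)$ equals the total number of tips. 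It then remains to show that the tips constitute a \emph{maximum} antichain, i.e.\ that their number is exactly $w$.

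That last step is where I expect the real difficulty to lie, and it is the part that genuinely uses the DAG structure rather than holding for $S$ being an arbitrary finite poset: a plain ``diamond'' (two incomparable elements below a common top) has width two but only one maximal element, so the equality \emph{number of tips} $=w$ cannot follow from posethood alone. To close the gap I would lean on the gradedness the construction aims for --- the ``desirable'' property of Remark~\ref{cover} that all maximal chains run from the genesis to a tip with equal length --- together with the binary-tree ``superset'' recursion, to argue that any antichain can be pushed upward, element by element along covering relations, to an antichain of tips of at least the same size, so that the tips realize the width. Making this pushing-up argument precise, and isolating exactly which hypothesis on the DAG guarantees it, is the crux; I would state that hypothesis explicitly, and, if a fully general finite poset is really intended, flag that the equality with $w$ needs it.
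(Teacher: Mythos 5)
Your analysis is more careful than the paper's own treatment: the paper's entire ``proof'' reads ``Obvious by \cite{poset} and the reverse rank function in Definition \ref{def10},'' so it supplies no argument at all for the step you correctly single out as the crux. Your handling of the right equality (Dilworth in chain-cover form, then extending each chain of a minimum cover to a maximal chain, which exists by finiteness) is sound and is presumably what the citation is meant to gesture at; your easy direction $\max_{E}\rho(E)\le w$ (writing $w$ for the width, as you do) is also correct.

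The real content of your proposal, however, is the observation that the left equality fails for a general finite poset, and your counterexample is decisive. But note that it also defeats the repair you sketch: take a least element $g$, two incomparable elements $a,b$ above $g$, and a single top $t$ above both. This poset satisfies the paper's DAG conventions (a node covering at most two others) \emph{and} the gradedness of Remark \ref{cover} --- every maximal chain $g\lessdot a\lessdot t$ and $g\lessdot b\lessdot t$ runs from genesis to tip with equal length --- yet the width is $2$ while the unique maximal element forces $\max_{E}\rho(E)=1$. So the pushing-up argument cannot be rescued by gradedness: $a$ and $b$ both push up only to $t$ and collide. The structural fact is this: whenever the poset has a least element, $\max_{E\in S}\rho(E)$ equals exactly the number of maximal elements (witnessed by $E=g$, as you say, and bounded above by it for every $E$), so the theorem's left equality holds if and only if the maximal elements themselves form a maximum antichain --- precisely the property that Remark \ref{cover} concedes typically fails (``some of the tips are not in the same antichain''). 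The honest fix is either to add that hypothesis to the statement or to weaken the conclusion to $\max_{E}\rho(E)=\#\{\text{maximal elements}\}\le w$; your instinct to flag a missing hypothesis explicitly is right, but gradedness is not the hypothesis that does the job.
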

\begin{proof}
Obvious by \cite{poset} and the reverse rank function in Definition \ref{def10}. 
\end{proof}

\begin{definition}[system of distinct representatives] Suppose that $A_1, A_2, \dots, A_N$ are sets. The family of sets $A_1, A_2, \dots, A_N$ has a system of distinct representatives if and only if there exist distinct elements $z^{(1)}, z^{(2)}, \dots, z^{(N)}$ such that $z^{(i)} \in A_i$ for each $i = 1, \dots, N$.
\end{definition}\label{sdr}

\begin{theorem}[Duality]
The minimum number of non-redundant edges from an antichain to its previous one is equal to the maximum number of distinct representatives in the latter antichain.
\end{theorem}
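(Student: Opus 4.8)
The plan is to recognize this statement as a bipartite min--max duality and to reduce it to the classical marriage and König theorems, using the antichain decomposition already in hand. First I would invoke Theorem~\ref{partition} to fix an antichain partition of the DAG-poset and single out two consecutive antichains: the previous antichain $A$ (nearer the genesis) and the latter antichain $B$. Between them I build the bipartite graph $H$ whose parts are $A$ and $B$ and whose edges are exactly the approval arcs $b \to a$ with $b \in B$ and $a \in A$ (recall each $b$ approves at most two parents, so $H$ has bounded degree on the $B$-side, though this bound is not essential). All the combinatorics of the statement then lives inside $H$.

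Next I would translate both sides of the claimed equality into matching quantities on $H$. For the right-hand side, let $N(a) = \{ b \in B : b \text{ approves } a \} \subseteq B$ and apply the system-of-distinct-representatives framework of Definition~\ref{sdr} to the family $\{ N(a) \}_{a \in A}$: a system of distinct representatives is precisely a matching of $H$ that saturates the chosen parents, and the representatives themselves lie in the latter antichain $B$, exactly as the statement demands. Hence the maximum number of distinct representatives in $B$ equals the maximum matching number $\nu(H)$, with the defect version of Hall's theorem handling the case where not every parent can be represented. For the left-hand side, I read \emph{non-redundant} as pairwise vertex-disjoint: an edge becomes redundant once another edge already represents the same parent, so a non-redundant edge set is a matching, and a minimal non-redundant edge set that still represents the maximum number of parents is an inclusion-minimal maximum matching. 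The two bounds are then immediate: distinct representatives require distinct, disjoint edges, so at least $\nu(H)$ edges are needed; and any maximum matching supplies exactly $\nu(H)$ non-redundant edges. Equality of these extremal quantities is König's theorem for bipartite graphs, which I would cite as the engine of the duality.

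The main obstacle is not the matching theory but pinning the informal vocabulary to the correct graph-theoretic objects and checking that the reduction is faithful. Concretely, I must (i) argue that restricting to \emph{consecutive} antichains is legitimate, i.e.\ that the relevant approval arcs do not skip past $A$, so that every counted edge genuinely runs from $B$ to $A$; (ii) justify identifying ``non-redundant'' with ``vertex-disjoint,'' i.e.\ that retaining more than one edge into a single parent is exactly the redundancy being excluded; and (iii) dispose of the deficient case, where $|N(S)| < |S|$ for some $S \subseteq A$, by invoking the defect form of Hall's theorem so that the maximum-representative count and the minimum non-redundant edge count still coincide at $\nu(H)$. Once these three points are settled, the equality follows directly from the König--Hall duality applied to $H$, completing the argument.
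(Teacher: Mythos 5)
The first thing you should know is that the paper never proves this theorem: it is stated bare, just like the ``Dual of Dilworth'' theorem a few lines earlier, so there is no authorial argument to compare yours against and your proposal must stand on its own. On its own terms it has a genuine gap, and it starts with your dictionary. Your reading of \emph{non-redundant} as ``pairwise vertex-disjoint'' is not the paper's. In the discussion of Figures \ref{fig001} and \ref{fig0010}, a redundant edge is one whose approval is \emph{transitively implied}: node 9 approving both node 5 and node 2 is redundant because node 5 had already approved node 2, so redundancy means approving an ancestor of another approved node, not sharing an endpoint with another edge. Under the paper's notion, two distinct children non-redundantly approving the same parent is perfectly legal, so the set of non-redundant edges from $B_i$ into $B_{i-1}$ is emphatically not a matching, and your identification of the left-hand side with an inclusion-minimal maximum matching does not survive.

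Second, even granting your dictionary, the argument establishes only a tautology. You define the right-hand side to be the maximum partial SDR, which is the maximum matching number $\nu(H)$, and you define the left-hand side to be the size of a smallest edge set that is a matching and represents the maximum number of parents, which is again $\nu(H)$ by construction; the two sides coincide because you have made them the same quantity, not because of any duality. The appeal to K\"onig's theorem is decorative: K\"onig relates maximum matching to minimum \emph{vertex cover}, and no vertex cover appears anywhere in your argument. A proof of a min--max statement must exhibit two independently defined quantities and prove they agree (weak duality in one direction, an augmenting-path or extremal argument in the other). Here the real work, which the proposal skips, is to pin down what ``minimum number of non-redundant edges'' counts. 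The most natural non-trivial reading is an edge-cover-type quantity: the fewest approval edges from $B_i$ to $B_{i-1}$ that keep every node of $B_{i-1}$ approved and every node of $B_i$ anchored to the previous box. But then Gallai's identity gives a minimum of $|B_{i-1}|+|B_i|-\nu(H)$, which equals $\nu(H)$ only when $H$ has a perfect matching, so under that reading the claimed equality can fail. In short, the statement needs an interpretation under which it is simultaneously non-trivial and true; your proposal supplies an interpretation that makes it trivial, while the natural alternative makes it conditional on a Hall-type hypothesis you would then have to justify from the box-construction rules (e.g., that every node of $B_{i-1}$ acquires at least one child in $B_i$ before the box closes).
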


\subsubsection{A dual layer construction and two essential roles}

Based on the aforementioned notions, we present the main algorithm which constructs a new layer on top of a convoluted, partially ordered DAG.
We will decompose a DAG using chains and  antichains. The original DAG is preserved as it normally operates, and there is a dual-layer being constructed almost synchronously based on subset relation among the nodes on the DAG.

\begin{figure}[ht!]
\begin{center}
\includegraphics[width=0.65\textwidth]{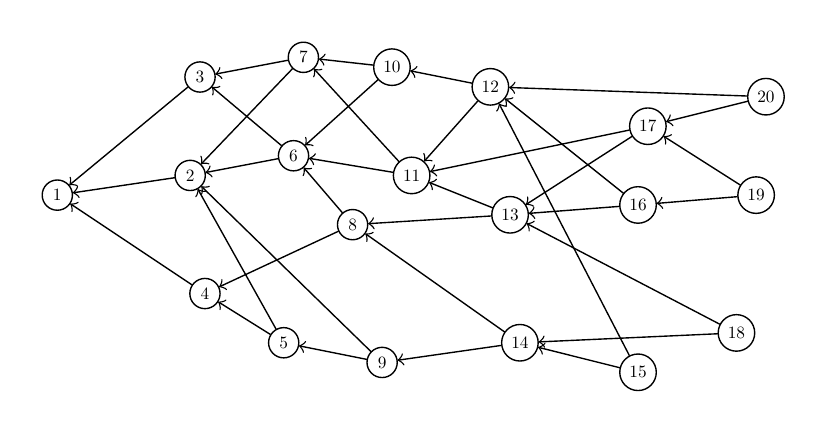}
\end{center}
\caption{Transaction units connected in a DAG whose genesis is node 1.}\label{fig000}
         \end{figure}

Let us consider a DAG of 20 nodes (with four tips of nodes 15, 18, 19, 20) as in Figure \ref{fig000}.
The $i$-th-created node shows its node number $i$, i.e, node $1$ is the genesis node. Child nodes are the supersets of their parent nodes. Incomparable nodes form an  antichain and the set of incomparable child nodes must be a ``cover'' of the set of incomparable parent nodes. In Figure \ref{fig001}, for example, the nodes $8, 9, 10, 11$ are included in an antichain (the third bluebox from the left) since they are mutually incomparable.
The ``union'' of those four nodes (i.e., antichain) include the ``union'' of nodes 5, 6, 7 (i.e, antichain) -- no single node among the four nodes $8, 9, 10, 11$ include all three nodes 5,6,7 as a node can only approve two previous nodes.

As described in Figure \ref{fig001}, by the decomposition using antichains, any entangled DAG can be restructured into a linked list (i.e., a chain of blue boxes in Figure \ref{fig001}), resembling a blockchain. For this reason,  the blockchain is considered as a special version of a DAG, which is  a generalized blockchain.

\begin{remark}
We do not change a given DAG to a single chain; Rather, we preserve the original DAG-based ledger to make the system more secure and trusted. This is the embarkation of our dual ledger-keeping algorithm. \end{remark}

Specifically, we present a series of descriptions as in Figures \ref{fig000}, \ref{fig001}, \ref{fig0010} and \ref{fig0_0}. 
Let $B_k$, $k=1, \dots, 6$ denote the bluebox in Figure \ref{fig001} from the left to right in order. $B_1$ has elements of nodes 2,3,4, $B_2$ has elements of nodes 5,6,7, and so on.
Then $B_1 \subseteq B_2 \subseteq B_3 \subseteq B_4 \subseteq B_5 \subseteq B_6$ and  $B_1\lessdot B_2 \lessdot B_3 \lessdot B_4 \lessdot B_5 \lessdot B_6$ as a bluebox covers its most recent and previous one.

\begin{figure}[ht!]
\begin{center}
\includegraphics[width=0.70\textwidth]{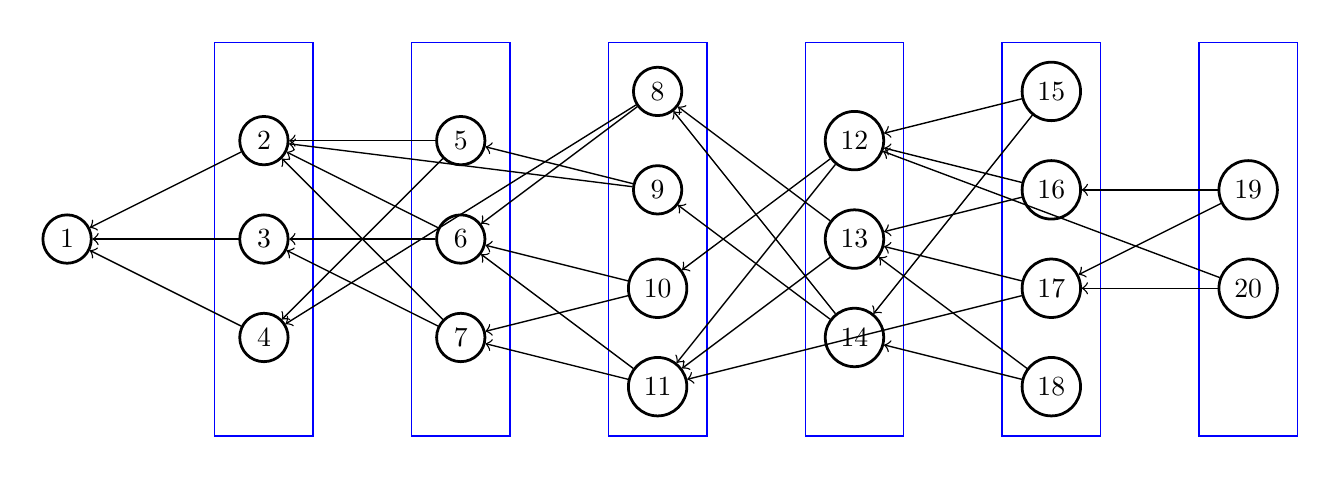}
\end{center}
\caption{A series of boxes (antichains) with incomparable units, constructed based on the partial order by their subset relations.}\label{fig001}
         \end{figure}

\begin{figure}[ht!]
\begin{center}
\includegraphics[width=0.70\textwidth]{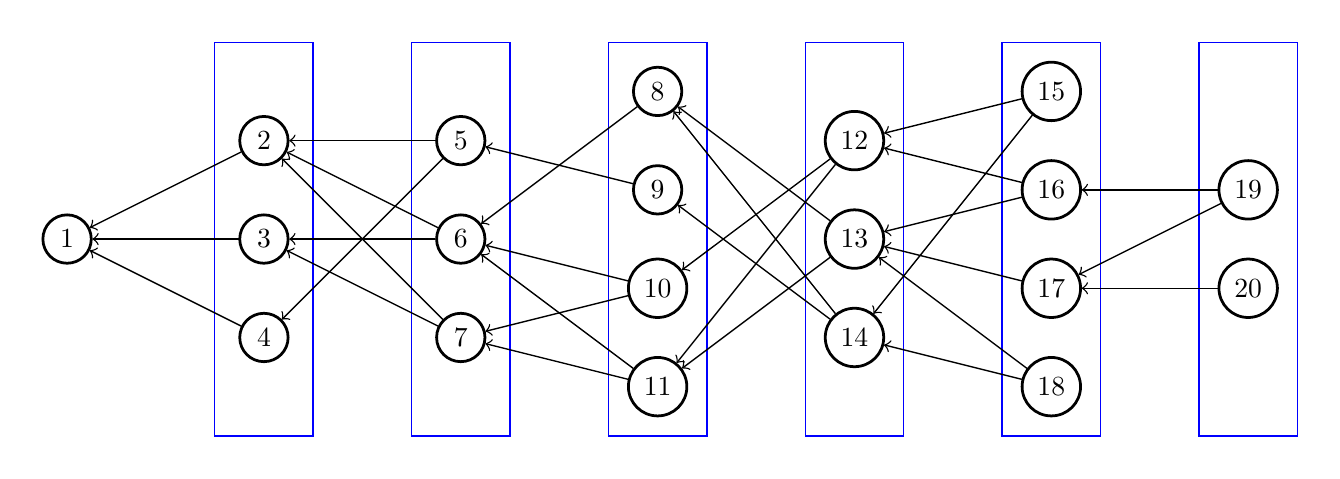}
\end{center}
\caption{A series of boxes (antichains) after redundant edges (approvals) are removed.}\label{fig0010}
         \end{figure}

In including the most recent previous transaction, if the same address issues multiple transactions (multiple nodes) then it is desirable to form a single linked list (i.e., a chain or a totally ordered set). Tracking its own series of transactions will clarify such task \footnote{A rapid succession of the multiple transactions using different addresses is a typical way to attack the system. More details are discussed in Section \ref{consensusprotocol}.}.

Figures \ref{fig001} and \ref{fig0010} are slightly different: A few number of edges in Figure \ref{fig001} are missing in Figure \ref{fig0010}.
One of the two edges are removed from nodes 8, 9, 17, 20 as some of them are redundant for the sake of the consistency of the system.
Note that node 9 approved nodes 5 and 2, but node 5 approved node 2 earlier. Node 9's approval on node 5 means that it automatically approved node 2 as node 5 is the child of node 2. 
Node 9 is a child of node 5, so node 9 cannot be a child of node 2 since node 2 is a parent of node 5 -- it will hurt the parent-child relationship. (The nodes 17, 13, 11 are of the same case.).

The approval selections of nodes 8 and 20 are valid but undesirable. 
They did not approve redundant transactions and are, thus, harmless in this example.  
However, in a real situation some idling nodes might select unnecessary and old transactions that are already approved by many other nodes. This may cause a higher complexity in a variety of important operations. Further, this may put nonsensical and illogical cumulative weights on the DAG because it interferes with efficient ordering operation of the system. Selection of nonsensical transactions does not take place in our dual-ledger system, which will be discussed in  Section \ref{consensusprotocol}.

The blue boxes in Figure \ref{fig0010} are the antichains as well as the SDRs (if we see distinct chains from the nodes). From the geometric perspective, the elements of the most recent antichain can be considered as vertices of upper bounded orthants in a multidimensional space, and it follows that the earlier antichains' elements are inside of such orthants.

Note that there is no link among the elements of the same antichain. Let us pick a single node and appoint it to represent and act for its siblings. The nodes we will choose are the ones that joined their antichains the latest. 
In this example, such nodes are 4, 7, 11, 14, 18, 20 and they are all over the place in a DAG (see Figure \ref{fig0_0}.). We call them the box-closing-members, or ``boxers."

\begin{remark}[The roles of the box-closing-members, or boxers]\label{boxersrole}
The roles of the boxers are adaptable as the system evolves. 
One of the crucial roles is that they communicate with their siblings, the members of the same antichain, and broadcast to their siblings if there are some notable changes on the system. Moreover, there is a boxers' own network, on which  they can efficiently communicate with each other. The network is just a single chain connected by edges among them, but it is a semi-hidden network since it's exclusive for the boxers.
\end{remark}

\begin{figure}[ht!]
\begin{center}
\includegraphics[width=0.70\textwidth]{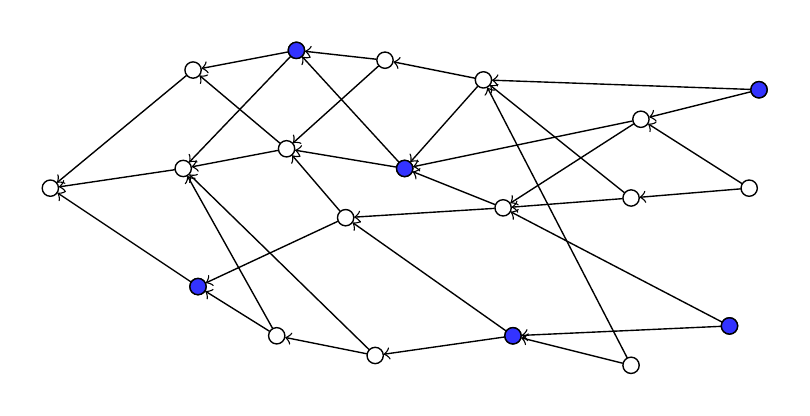}
\end{center}
\caption{Primal layer: Transaction units in a DAG, our primal space with boxers in blue}\label{fig0_0}
         \end{figure}

\begin{figure}[ht!]
\begin{center}
\includegraphics[width=0.70\textwidth]{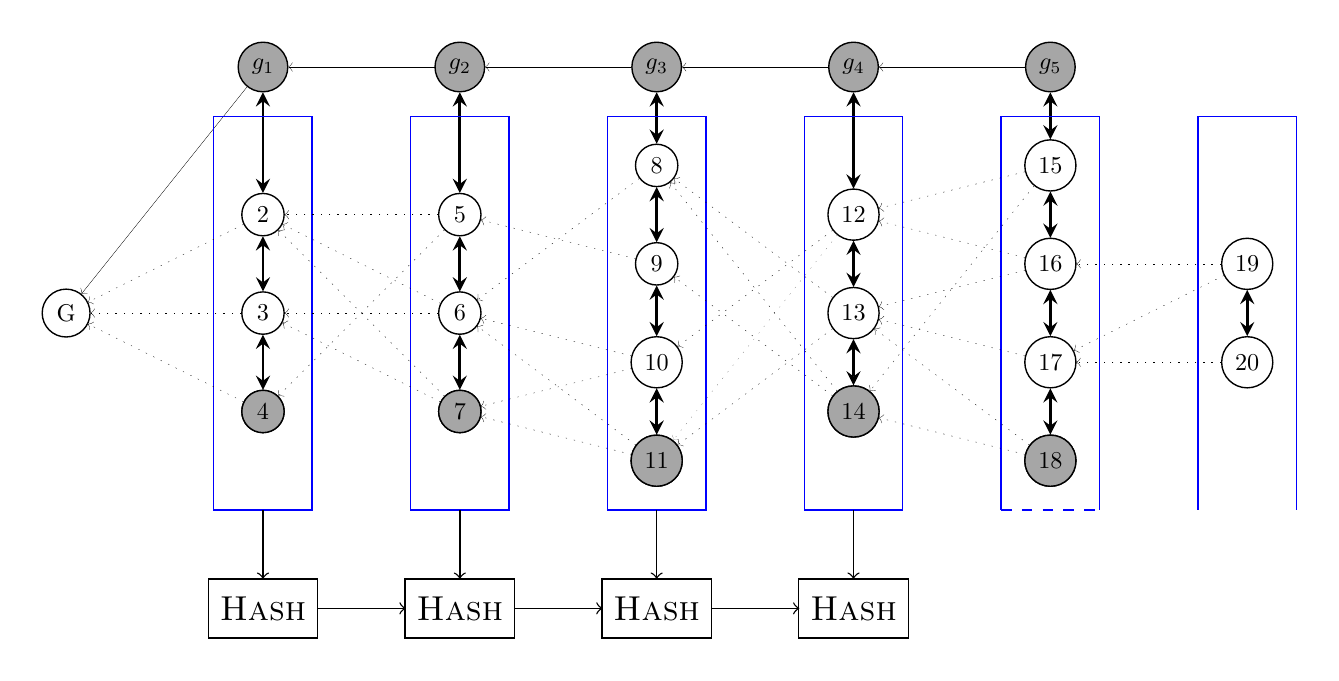}
\end{center}
\caption{Dual layer: Boxers (in gray) and their siblings in the dual layer, constructed in real time.}\label{fig00110}
         \end{figure}

The boxes are the antichains and they can be concatenated to form a chain: the chain of antichains. 
Individual box-genesis will be denoted by $i$-genesis (or $g_i$) if it's inside the $i$-th box (antichain). 
The $i$-genesis is the child of ($i-1$)-genesis and the parent of $i+1$-genesis, and ($i+1$)-genesis is the child of $i$-genesis and the parent of ($i+2$)-genesis, and so on. 
\begin{equation}\label{genesis}
\text{genesis} \lessdot \cdots \lessdot  (i-1)\text{-genesis} \lessdot    i\text{-genesis} \lessdot    (i+1)\text{-genesis} \lessdot \cdots
\end{equation}
There is also a chain (as in the above covering inequalities in (\ref{genesis})) of box-geneses which is the same height as the whole DAG -- The height of a DAG is the size of its longest chain.

\begin{remark}[The roles of the box-genesis]\label{genesisrole}
The box-genesis (a full-node) is selected among the nodes of good-standing track records. Once selected, one is responsible to check most recently approved transactions by its siblings, followed by announcing the final confirmation of their validity. Such transactions are in the parent-box (most recent previous box). When the final confirmation is placed, the box will finally be closed in agreement with a boxer and no more validations can be placed. More details are presented in Section \ref{consensusprotocol}.
\end{remark}

If the system has a fixed upper bound of the width or size of antichain, say $M$, and if the total number of vertices of a given DAG is $N$, then
\begin{equation}\label{nm}
\text{the height of the whole DAG is at most }\lceil N/M \rceil,
\end{equation}
which is a markedly important measure owing to the fact that the system governs the quantity $M$ per Monte Carlo simulation. Therefore, the quantity of (\ref{nm}) can only be estimated in the primal space. From the dual space the height of the whole DAG can easily be found, which equals to the number of closed antichains plus two at most.
 We will use this measure for further discussions, together with another criterion -- the time constraint $\tau$ for each antichain. This is also a vital constraint to control the system. As both $M$ and $\tau$ and their roles are presented in Remark \ref{dualc}, they are both paramount criteria for a boxer selection. The size of antichain will be determined by the box-closing boxer.

This chain of box-geneses is a network from the latest $i$-genesis (a root node) to the genesis (a sink node and the origin), hidden to users,  which helps make the system more secure and compact on the dual layer (in Figure \ref{fig00110}). The regular nodes are not informed of such chain unless the need arises. The detailed roles of the box-geneses are presented in Section \ref{consensusprotocol}.

Simply put, the box-making algorithm is as follows: open the box, examine and put items in the right packaging boxes, then close the boxes.
Using the boxes based on the algorithm detailed below, we construct a new layer which is a combination of a semi-hidden network and a normal chain.
This new layer is a separate space but it corresponds to the original DAG. 
This layer will be updated in real time as transactions are released and validated on the DAG . 

We present two algorithms -- Algorithm 1 at a fixed point of time and Algorithm 2 in real time. 
The following is to show how to fill a box at a specific time point with unassigned nodes. 
Let $B_i$ denote the $i$-th antichain (box) and $V_i'$ the corresponding set of vertices which approved two previous transactions but not yet validated by others. Put another way, $V_i'$ is the set of tips, about to issue new, or unvalidated, transactions. 

\begin{algorithm}
\caption{Constructing the $i$-th box ($B_i$) at a point of time for $i=1, \dots, n$}
\begin{algorithmic} 
\STATE Suppose the box $B_{i-1}$ is full, and $V_i' = \{v_{i_1}, \dots, v_{i_n} \}$.
\WHILE{there exists $v \notin B_i$ for every $v \in V'$ with $B_i \cup \{v\}$}
\IF{$v$ is a cover of the element(s) of $B_{i-1}$}
\STATE $B_i \leftarrow B_i \cup \{v\}$
\STATE $V_i'  \leftarrow V_i'  \backslash v$
\ELSE
\STATE $V_{i+1}'  \leftarrow V_{i+1}' \cup \{v\}$
\ENDIF
\ENDWHILE \\
Stop; The box $B_i$ has been filled up, let the last $v$ be the boxer. Then the $i$-th genesis $g_i$ is assigned, and we move on to the next box $B_{i+1}$ with $V_{i+1}' $
\end{algorithmic}
\end{algorithm}

\begin{remark}
The box-genesis will be selected among good-standing nodes only after the boxer is determined, which is an imperative rule to keep the system secure. (see Remarks \ref{genesisrandom} and \ref{genesisrandom1}.)
\end{remark}

\begin{remark}[Timestamp server]
Timestamps form a chain, and each timestamp includes the has of previous timestamps as in Figure \ref{fig00110}.
\end{remark}

\begin{remark}
Each box is identified by a hash, and is linked to its previous box by referencing the previous box's hash.
\end{remark}

The following is how to build a dual-layer in real time. See Remark \ref{dualc} for the constraints $M$ and $\tau$ for the size of antichains. 
A node that approves two previous transactions will be assigned to the corresponding antichain (based on the partial order).

\begin{algorithm}
\caption{Constructing the antichains in real time (i.e, a single node at a time)}
\begin{algorithmic} 
\STATE Suppose $|B_i| = k_i$ for all $i$.
\IF{a node $v$ just approved two previous transactions, and is a cover of the element(s) of $B_{j-1}$}
\STATE $B_j \leftarrow B_j \cup \{v\}$
\STATE Update $B_j = \{ v_{j}^{(1)}, \dots, v_{j}^{(k_j)}, v_{j}^{(k_j+1)} \}$
\ENDIF\\
\end{algorithmic}
\end{algorithm}

\begin{remark}[Boxchain on the dual-layer]
The dual layer corresponding to the original DAG-based layer has a chain of antichains (boxchain), which resembles a blockchain since each box consists of transactions and such boxes are serially linked.
\end{remark}

Similar to Bitcoin transactions, in our dual-ledger system most of the conflicts are easily determined if they are in different antichains (boxes).
If they are in different antichains (i.e, totally ordered), the latter one will be rejected.   
If suspicious transactions are in the same antichain, their cumulative weights can be compared. 
If such transactions happen to have the same weight (same number of descendants), some tie-break rules can be devised, including a fixed maximum width (the size of the largest antichain), a boxers role, etc. 
However useful they appear, robust criteria for final confirmation is necessary to build trust among participants and maintain systemic sustainability. 
For this reason, we develop a dominant consensus mechanism to reach the final stage of agreement which is due to the dual system of both DAG and boxchain ledgers.
\begin{remark}[The dual-ledger system]
The dual-ledger system consists of a DAG in primal layer and the corresponding boxchain in the dual layer.  
\end{remark}

\subsection{The Chain of Antichains}\label{tip2}
\subsubsection{Consensus protocol for the final confirmation with box-closing, and a subtree problem}\label{consensusprotocol}
  As many practitioners and researchers have already pointed out (see, e.g., \cite{tangle}, \cite{byteball} etc.), there can be a myriad of subtrees, and as a result, a DAG can be extremely wide and inefficient in terms of the final confirmation possibilities. 
 In Figure \ref{fig61}, there is no path from some latter blue nodes (good-standing nodes) to red nodes (malignant ones). 
 With subtrees, nonsensical transactions can be ``issued and approved'' by the group of red nodes in the absence of prior investigation of the validity of transactions. This is a major obstacle to reaching a ``consensus'' on the validity of previous transactions, which is critical in any distributed peer-to-peer system.

\begin{figure}[ht!]
\begin{center}
\includegraphics[width=0.80\textwidth]{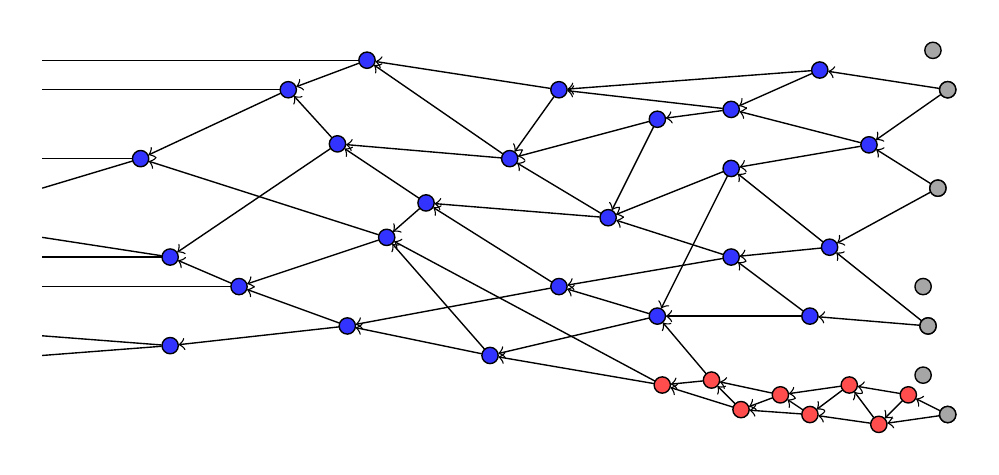}
\end{center}
\caption{Main subtree in blue vs. dishonest subtree in red. (tips are in gray.)}\label{fig61}
         \end{figure}

A subtree can occur haphazardly in a DAG, which is prone to unproductive behaviors. Regardless of the intention of forming a subtree, it is imperative to equip a channel to verify the validity of transactions from subtrees. This can be done at the algorithm-design level of the consensus mechanism of reaching decentralized final confirmation of already-approved transactions.

A subtree can take place anytime and a subtree always exists in a DAG, which renders negative behaviors possible.
Whether  it's formed by a good-will or not (by a good citizen or by a malicious attacker), there must be a way to check transactions' validity from subtrees.
 Allow us to present what can be done by our particular algorithm regarding a consensus mechanism. 
This is about reaching a final state of agreement from the peers -- the final confirmation of already-approved-transactions.

The main process of our consensus protocol is as follows. For the final confirmation of transactions of box (antichain) $B_{i-1}$,  it is required to reach agreement among the peers in the next box $B_{i}$. 
This means, a final confirmation process for $B_{i-1}$ initiates when the boxer of $B_i$ is selected. 
To join $B_i$ a node must approve at least one node from $B_{i-1}$. Thus, the formation of a box is totally based on what transactions the node approved.

\begin{remark}[The rule for the tip selection, or transaction selection and validation)]\label{ruletip}
The users are recommended to select most recent transactions. We use the rank function (in Definition \ref{rankfunction}) for this rule as the following.
\end{remark}\label{tipselection}

\begin{equation}\label{selectionrule}
r(v_k) \leq r(v_{k+1}) \leq r(v_k) + 1,
\end{equation}
where $v_k$ denotes a node that just finished its transaction validation right before $v_{k+1}$'s completion of validation. 
The inequality (\ref{selectionrule}) is equivalent to: 
\begin{equation}\label{selectionrule1}
v_k \in B_i \longrightarrow v_{k+1} \in B_i \text{ or } B_{i+1}.
\end{equation}
Once the final confirmation process begins, a later node can no longer select and approve transactions from the corresponding box whose transactions are being checked.

\begin{figure}[ht!]{\centering
         \subfigure[Primal layer: a DAG]{\includegraphics[width=0.45\textwidth]{flow0_0.pdf}}\qquad
         \subfigure[Dual layer: a boxchain (chain of antichains)]{\includegraphics[width=0.50\textwidth]{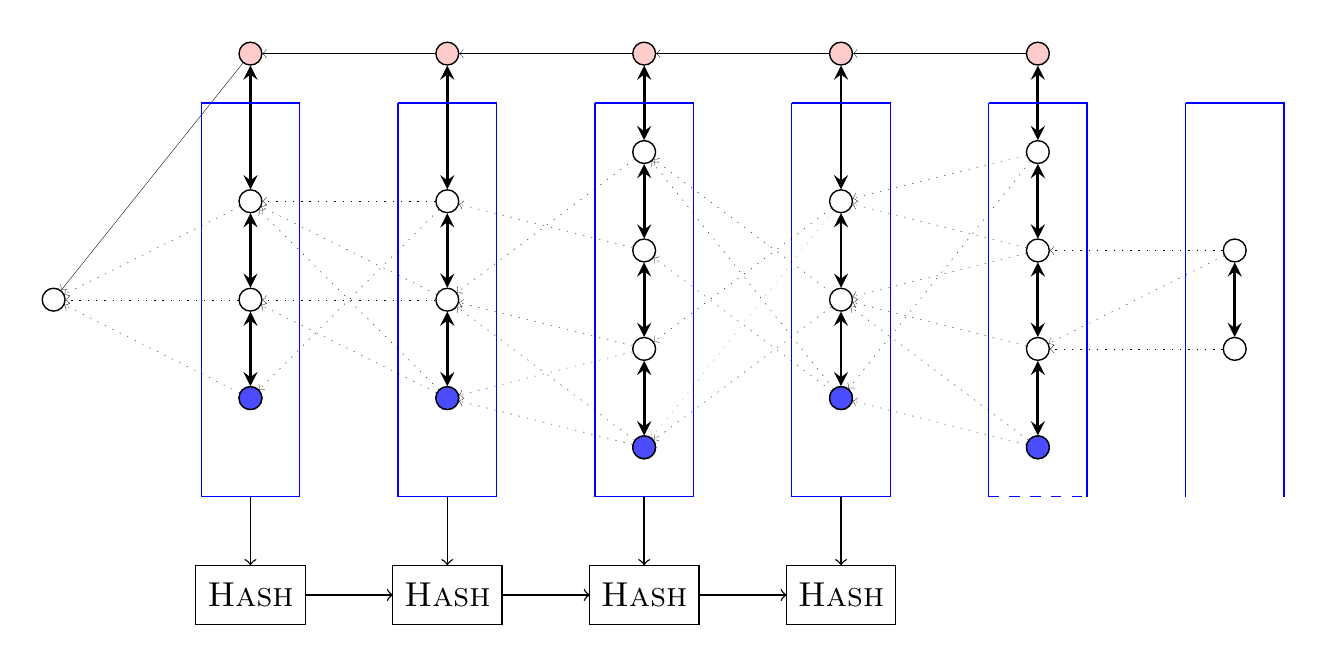}}
       	  \caption{The dual ledger-keeping. Boxers are in blue in both layers; box-geneses are in pink in the dual layer.}  \label{dualblock} }
         \end{figure}

The final confirmation will be placed on each box by its child box (the next neighboring box, or cover). Again, the final confirmation process for $B_{i-1}$ begins when the boxer of $B_i$ is selected. We, thus, list below the following {\it dual-criteria} for boxer selection. 
\begin{remark}[Criteria for a boxer selection]\label{dualc}
\end{remark}
\begin{itemize}
\item {\it Cardinality}: As mentioned in (\ref{nm}), the size of antichains can be determined by the number of nodes. 
Let $M$ denote the upper limit of the number of elements of the $i$-th box ($B_i$). Then we write $|B_i| \leq M$. If we only take this into account for a boxer selection, then the $M$th node in $B_i$ will be designated as the boxer. 

\item [-] For security purpose, $M$ will be selected as follows.
Recalling the principle of the inversion method, we compute the value $M= F_X^{-1}(p)$, where $F$ is the cumulative distribution function (c.d.f.) of any discrete random variable $X$ (or a discretized continuous random variable), $l \leq X \leq u$ where $l$ and $u$ are both positive integers; and $p$ the generated random number around the uniform distribution on $[0,1]$: It is known that if $U$ is a uniform (0,1) random variable, then with a given c.d.f.$F$, $F^{-1}(U)$ is a random variable.

\item {\it Duration}: Let the time $\tau$ denote the acceptable and agreeable time-limit (e.g., 20 seconds) for a user to wait until the final confirmation. Note that the average time until the final confirmation would be about $1.5 \tau.$ (See Remark \ref{timec} for details.) Suppose that the first node $v_{i_1}$ of box $B_i$ comes in at time $t = s$. In other words, at time $t=s$ a node  $v_{i_1}$ just finished its validation of two previous transactions in $B_{i-1}$ (or one in $B_{i-1}$ and another one in $B_{i-2}$). Suppose that a node $v_{i_k}$ finishes the validation of two previous transactions at time $t \leq s+\tau$, with the next one $v_{i_{k+1}}$ completing its approvals at time $t > s+\tau$. Then the node  $v_{i_k}$ will become the boxer and $v_{i_{k+1}}$ will be the first member of $B_{i+1}$. In fact the node $v_{i_{k+1}}$ cannot select any transactions from $B_{i-1}$ by (\ref{selectionrule1}). 
\item [-] No more nodes can be added to the box after whichever criterion is met earlier. The final confirmation process needs to get started when a boxer is selected.
\item [-] {\bf Exception:} If the size $M$ is reached at beyond allowable speed (i.e., in case that incoming transactions are extremely frequent than normal), the time constraint will be used. This is to keep the system safe from possible attacks. (see Section \ref{worst}). 

\end{itemize}

Note that the size of boxes (antichains) depends on the boxer selection problem as in Remark \ref{dualc}.
Transactions issued by the members of $B_{i-1}$ are approved by the members of $B_i$, but they are just individual validations.
We need a strong consensus protocol in order to place the final confirmation of entire set of transactions of the previous box.
\begin{table}[htp!]\nonumber
\centering{
\begin{tabular} { l }
  \hline
  {\bf Algorithm 3} The final confirmation of the transactions in $B_{i-1}$, conducted by $B_{i}$\\ 
\hline
Step 1. Once the box $B_i$'s size becomes $M$ or time is up ($\geq \tau$), i.e., $B_i$ is full,\\
\qquad \quad \  \  the $i$-genesis ($g_i$) starts to check its siblings' validations (i.e., validity of transactions of $B_{i-1}$).\\
 \\
Step 2. If all transactions of  $B_{i-1}$ are legitimate based on the consensus protocol (below), go to Step 4.\\
\qquad \quad \  \ Otherwise, go to Step 3.\\
\\
Step 3. Disable the nodes with illegal transactions in $B_{i-1}$, and report it to the genesis group.\\
\qquad \quad \  \  Disable all nodes in $B_i$ that approved such illegal nodes. \\

  \\
Step 4. The final confirmation is placed by the $g_i$. \\
\qquad \quad \    \ $B_{i-1}$ is closed and marked ``true,'' and every transaction issued in $B_{i-1}$ becomes final,\\
\qquad \quad \    \ followed by the timestamp synchronization with all the previous box-geneses.\\

 \hline
\end{tabular}
\label{table2info}
}
\end{table}

Therefore, the whole boxchain (chain of antichains) is marked ``true" in a recursive way.  
Note that there is another recursion inside each box as in the following consensus mechanism.\\

\noindent
\underline{\bf Consensus protocol in a dual recursion} 
\begin{itemize}
\item The 2+2 process in recursion:\\
When a node approves two previous transactions it will be added to a box as in the Algorithm 2. As the node joins the box, it is automatically augmented to the last node in the box and is required to check the validity of transactions approved by the neighboring node.
This process continues recursively from the second to the last node (boxer) of the box. Simply put, each node checks its prior neighbor's approvals upon arrival. This is a dual validation process for every node conducted both in primal and dual layers.

\item Final confirmation by the box-genesis:\\
The final confirmation of recent transactions will be made by the box-genesis as presented in Algorithm 3. This role resembles that of miners' of Bitcoin, and such validation process is conducted by the box-genesis that only exists in the dual layer. Rewards are given to a box-genesis when the final confirmation is placed.
\end{itemize}

The final confirmation is placed when the consensus protocol is completed.
The box-genesis broadcasts the termination of validation process to all other members. 
In our peer-to-peer system anyone can take such role (not the miners vs. the rest) as long as one shows a good track-record. 

\begin{remark}[The box-genesis group membership eligible to good-standing nodes]\label{genesisrandom2}
The roles of box-genesis is crucial to keeping our ecosystem healthy and secure. 
Excellent history of positive behavior is required for peers to renew this membership and new members with outstanding track-records can also be selected.
The box-genesis among the leaders of consensus protocol, and this position is open to everyone but follows a random assignment process.
\end{remark}

\begin{remark}[Random assignment process]\label{genesisrandom}
The members of a box-genesis group are not totally trusted even though they qualify -- its box assignment is completely random with some conditions, and not known to the box-genesis until the last node of box (boxer) is determined. After the boxer is selected the box genesis will be appointed by a random selection among the nodes other than the boxer.
\end{remark}

\begin{remark}[Boxer's role for dual safety]
The boxer is the one communicating with its siblings (nodes of the box), together with other boxers.
However, the final confirmation of validations of previous transactions must be determined solely by a box-genesis. 
For dual safety, the boxer keeps the final confirmation from the box-genesis in check. Some rewards are given to the boxer. 
\end{remark}

\begin{remark}
As we discussed, our recursive consensus mechanism can perform well in terms of both efficiency and scalability. 
This consensus protocol is unique -- in our dual-ledger ecosystem there are no heterogeneous peer groups. 
Anyone qualified and ``lucky'' node can play a vital role in the system, e.g., as a box-genesis or as a boxer.
 It is easy to see that more active peers will have higher chances to get more rewards. 
\end{remark}

\subsubsection{Possible attack to the system and its success probability}\label{worst}

A double-spending attack can cause a serious integrity violation in any distributed peer-to-peer system. Although our recursive validation and final confirmation systems are robust, every possible attack needs to be accounted for in order to keep the network safe. A possible and worst-case attack scenario can be imagined as follows. Suppose that a malicious and affluent party attempts to take over the system over some time-window by issuing an immense number of transactions almost simultaneously with all different addresses. In our dual ledger-keeping system, a complete dominance over a pair of back-to-back boxes (antichains) including the box-geneses and the boxers is necessary for a meaningful damage to transactions.

From the attacker's perspective, the favorable event is that there are no other transactions issued while two consecutive boxes are formed. 
Suppose that a time constraint $\tau$ (in Remark \ref{dualc}) is used for a single box-closing. 
For simplicity we also assume that the transaction arrivals follow a homogeneous Poisson process. More detailed and realistic, nonhomogeneous compound Poisson processes, etc. are covered in Section \ref{stoc}.
It is known that the corresponding inter-arrival times are i.i.d. exponential random variables with the same parameter $\lambda$ of the Poisson process (the average number of transactions in a given unit time). 
Let a random variable $W$ denote the time until the next transaction, which follows an exponential distribution with a p.d.f. $f(x) = \lambda e^{-\lambda x}, x \geq 0$.
Then the following can be written:
\begin{equation}\label{p1}
p= P(\text{no transactions while two boxes are formed}) =  P(W > 2\tau) = e^{- 2\lambda \tau}.
\end{equation}
Suppose that the system had set a time limit $\tau$ of 20 seconds (1/3 minute) for the final confirmation process. Assuming the average number of transactions per minute is 30 (i.e., $\lambda = 30$), we have  
\begin{equation}\label{p2}
p= P(W > 2(1/3)) = e^{- 2(30)(\frac{1}{3})} \approx 0.000000002061.
\end{equation}
For a more speedy final confirmation process, let  $\tau = 10 \text{ seconds}$ with the same $\lambda = 30.$
\begin{equation}\label{p3}
p= P(W > 2(1/6)) = e^{- 2(30)(\frac{1}{6})} \approx 0.00004539,
\end{equation}
which is still a sufficiently slim chance. 
The results clearly demonstrate the following.
\begin{itemize}
\item More transactions will lower the probability of a successful attack, i.e., $\lambda \uparrow \Longrightarrow p \downarrow.$
\item A longer box-closing interval also brings down the chance of a successful attack, i.e., $\tau \uparrow \Longrightarrow p \downarrow.$
\end{itemize}

Note that the number of transactions will be sizable in a booming e-commerce marketplace, which makes it enables a very speedy final confirmation process. 
With this reasoning, let us consider another case, e.g., say 100 transactions per minute on average (i.e., $\lambda = 100$).
Then our time constraint $\tau$ can be selected so that the probability of the attacker's favorable event is kept negligible. 
Let us put an upper bound of $p = 0.000001$ (one out of a million) on a successful attack chance. 
Then we can find $\tau$ using
\begin{equation}\label{p4}
 P(W > 2\tau) = e^{- 2(100)\tau} \leq 0.000001,
\end{equation}
followed by
\begin{equation}\label{p5}
\tau \geq \frac{ \ln (0.000001)}{-200} \approx 0.06907755 \text{ minutes},
\end{equation}
which is about 4.14465316 seconds. This means that if we put on a 5 second rule for the time constraint $\tau$ then the attacker's success probability would be less than 0.000001 (again, $\tau \uparrow \Longrightarrow p \downarrow$). If it still seems possible, note that we have a series of box-genesis (two different ones) in the associated back-to-back boxes and they are independent of transactions.

\begin{remark}\label{genesisrandom1}
Together with a random selection of the box-genesis, the probability of successful attack would be almost zero in any case. 
\end{remark}

Detailed stochastic aspects of transaction flows will be discussed in Section \ref{stoc}, where one can find more general and useful ideas about how to analyze related random processes.

\begin{remark}[Average time until the final confirmation]\label{timec}
It is about $1.5\tau$ which is the average of $\tau < T < 2\tau$.
\end{remark}
The actual amount of time until the final confirmation for a singe transaction (or a node) is varied depending on the time point when the node joined its  corresponding box. 
Such waiting time until the final confirmation is $\tau < T < 2\tau$ since the final confirmation is made by the next neighboring box (child box). Note that it takes $\tau$ to form a single box.  It is not hard to see that if a node is the first member of a box, its transaction's final confirmation time would be approximately $2\tau$ since two more boxes need to be closed.
 In case of a boxer, the confirmation time will be about $\tau.$ 
Recall that the transactions are already approved by their child nodes in a DAG, and therefore our consensus protocol in the dual layer can place a powerful final-confirmation.

\begin{remark}
In the dual layer, we have a chain of antichains (boxchain) and each antichain (box) consists of a single chain. This means we have a total order in the dual layer and obviously any conflicting nodes can easily be checked. 
\end{remark}

\noindent
Let us finalize this section by supporting Remarks  \ref{ruletip} and \ref{dualc} which enable the following.
\begin{itemize}
\item No purely random approvals of tips - a lazy user can approve a fixed pair of old transactions

\item Inclusion in a more recent antichain, followed by a 2+2 validation process.

\item In case that a new legitimate transaction is not approved and waits longer than the formation of two new boxes, an empty transaction will be issued as a series. 
 But it still needs to select and approve another node to issue the same transaction in order to get a validation by others. Note that such empty transaction contributes to the network's security. 

\item $M$ - the width constraint for antichain: This will be determined by the frequency of transaction flows (see Section \ref{stoc} for more details about a random transaction flow).

\item $\tau$ - the time constraint for antichain: We do not wait until the box is filled up to the given limit $M$. Recall: If the size $M$ is reached at beyond allowable speed (incoming transactions are extremely frequent than normal), the time constraint had better be used. This is to keep the system safe from the worst case scenario we just studied.
\end{itemize}

\subsubsection{Incentive system with proper rewards and fees}\label{incentivesystem}
Proper rewards and fees are essential to keep a distributed peer-to-peer network at a desirable status. Note that in Bitcoin ``mining'' is the incentive mechanism for decentralized security. Our incentive device is for every participant as a mixture of rewards for contribution and fees for usage. The fees in our ecosystem are minimal but serve as an important obstacle to nonsensical orders. Below are the possible cases for rewards. Rewards are given to such user(s) who

\begin{itemize}
\item put legitimate validations in the DAG-based primal layer.
\item put legitimate validations in the boxchain-synchronizing dual layer.
 \item complete the job as a box-genesis 
 \item complete the job as a boxer
\item report abnormal events
\end{itemize}

Recall that the appointment as a boxer or a box-geneis is conditionally at random -- any qualified user can play these roles multiple times (as mentioned preciously, e.g., see Remarks \ref{boxersrole}, \ref{genesisrole}, \ref{genesisrandom2}, \ref{genesisrandom}). It is not difficult to see that more active and honest participation will increase the chance to take such positions and receive more rewards, which could end up with more rewards than fees.  
Fees are paid whenever a transaction is issued. This incentive system is certainly beneficial to both users and the whole ecosystem. Good citizenship is crucial for our purely distributed peer-to-peer system as much as any type of society is in need of.

\subsubsection{Our dual ledger-keeping system, boxchain compared with others}\label{tip3}

The main structure of a blockchain is a single serial link of blocks (hence the name) and each block consists of transactions, therefore the blockchain is a totally ordered set. If we look at each transaction as a single block, there is no longer a single chain; Rather, we see very disorganized, myriads of intertwined blockchains. This generalized blockchain and can be depicted as a DAG, consisting of a set of vertices (transactions) and a set of edges (issue and validation of transactions). The transactions in a DAG are only partially ordered. Indeed, the validation process in a DAG is very efficient and relatively but not as much secure as in the original blockchain. Considered a third generation blockchain after Ethereum as the second, which emerged after blockchain, a number of developers have recently released DAG-based distributed ledgers, associated cryptocurrencies, and applicable payment systems, led by IOTA, Hashgraph, etc.

 Irregular and partially ordered DAGs  can be reshaped into a more compact form using the discrete mathematical notions of chains, antichains, and SDRs. This idea, coupled with a doubly-secure  confirmation protocol, has inspired our dual ledger-keeping system, which is partially based on  \cite{union2}, \cite{lkp},  \cite{union1}. While preserving the DAG-based ledger, we build another ledger with a chain of antichains to structurally resemble the original blockchain. As a result, along with an effective consensus protocol, our boxchain is an efficient, secure, and decentralized DLT.

\begin{table}[htp!]
\centering{
\begin{small}
\begin{tabular} { l | c c c }
  \hline
 &Bitcoin &  IOTA & Box Protocol \\ 
\hline\hline
Usage  & Payment	& Used for IoT applications   &Payment\\
\hline\hline
Finality	&Yes	&  No & Yes\\
\hline\hline
Confirmation	&\multirow{2}{*}{10 minutes at least }	 &\multirow{2}{*}{2 minutes (validation)} & 30 seconds on average if $\tau = 20$ seconds \\
time                   & 										&										&(depending on the values of $\tau$ and $\lambda$)\\
\hline\hline
Transaction fees & 0.001 BTC	&  None & Tiny or none (if rewards are given)\\
\hline\hline
TPS&	7 &  1000 & 5000\\
\hline\hline
Decentralization	&	Yes & No & Yes\\
\hline\hline
Security	&high 	&  low & high  \\
 \hline
\end{tabular}\caption{Brief Comparison of Bitcoin, IOTA and Box Protocol}\label{table1info}
\end{small}
}
\end{table}

\section{Stochastic models of the dual ledger-keeping system}\label{stoc}

\subsection{A nonhomogeneous Poisson process - practical assumptions on the frequency of transactions}

Unlike some of the assumptions made in \cite{tangle}, we suppose that $X(t)$ is a nonhomogeneous P.P. (Poisson Process), where $\{\lambda(t), t\geq 0\}$ is a stochastic process itself. This is because we believe that transactions' frequency can be varied over certain time periods (e.g., there might be more transactions in the middle of night).
For the sake of completeness, allow us to present some basic concepts first. Stochastic process can simply be called ``one-parametric family'' of random variables $X(t), t \in T$. For example, in case of Markov Chains we have $T = \{ 0, 1, 2, \dots  \}.$ In our case we suppose $ T = \{t \ | \ t\geq 0\},$ which means ``time'' (mathematically it is nothing but nonnegative half of the real line).
The following is well known. $X(t), t \geq 0$ is a homogeneous Poisson process if it has two properties: (i) $X(t)$ has independent increments, i.e., for $0 \leq t_0 < t_1 \leq t_2 < t_3 \leq \dots \leq t_{2n -1} < t_{2n}$,
$X(t_1) - X(t_0), X(t_3) - X(t_2), \dots, X(t_{2n}) - X(t_{2n-1})$ are independent random variables; (ii) For $s \geq 0, t > 0$, the random variable has Poisson distribution with parameter $\lambda t$,
$$P(X(s+t) - X(s) = k) = \frac{(\lambda t)^k}{k!} e^{- \lambda t}, k = 0, 1, \dots,$$
which follows that $E(X(s+t) - X(s)) = \lambda t$ and $var(X(s+t) - X(s)) = \lambda t.$ Poisson process is a random event process, e.g., transactions occur in the system (our case), customers arrive to a store, cars arrive to a gas station, etc. $\lambda$ is the expected number of such random events in unit time. 

For nonhomogeneous Poisson process, Condition (ii) only needs to be changed. Instead of a constant rate $lambda$ we assume that there exists a nonnegative function $\lambda(u), u \geq 0$ such that for $s < t$ $X(t) - X(s)$ has Poisson distribution with parameter
$$\int_s^t \lambda(u) du,$$ 
which means the parameter depends on time interval. This assumption is crucial in our business because our data set shows some clear trends in frequency of transactions (and their sizes as well). $\lambda(u)$ is called ``event density'' (in our case, transaction density) because in the interval $(t, t + \Delta t)$
$$ \int_{t}^{t + \Delta t} \lambda(u) du \approx \lambda(t) \Delta t$$
is the expected number of events.

\begin{example}[Transaction Density from Data Set]
Suppose that the following transaction density (or event density) are found from recent data set (unit time = 1 hour).  
$$\lambda(u) = \left\{ \begin{array}{ll}  2u, & 0 \leq u \leq 1\\
							 2, & 1 \leq u \leq 2\\
							 4-u, & 2 \leq u \leq 4.  \end{array}\right.$$
For simplicity, we are interested only in the time interval [0, 4]. We want to calculate 
\begin{itemize}
\item [(i)] the probability that two transactions occurred during the first two hours, 
\item [(ii)] the probability that two transactions occurred during the second two hours.
\end{itemize}
\end{example}
For (i), we need to calculate the event density, say $\mu$, of the first two hours:
$$\mu = \int_0^2 \lambda(u) du = \int_0^1 \lambda(u) du + \int_1^2 \lambda(u) du = \int_0^1 2u du + \int_1^2 2 du = 1+2 =3.$$
Thus, $P(X(2) = 2) = \frac{3^2}{2!} e^{-3} = 0.2240.$
For (ii), we have:
$$P(X(4) - X(2) = 2) = \frac{\left(\int_2^4 \lambda(u)du\right)^2}{2!}e^{- \int_2^4 \lambda(u)du} = \frac{\left(\int_2^4 (4- u) du\right)^2}{2!}e^{- \int_2^4 (4-u) du} = \frac{2^2}{2!}e^{-2} = 0.2707. $$

\begin{remark}[Doubly stochastic Poisson processes] A nonhomogeneous Poisson process with the rate function $\{ \lambda(t), t \geq 0 \}$ -- it is a stochastic process itself -- is called a ``doubly stochastic Poisson process.'' This is a great fit for applications which have ``dependent" process increments, e.g., some seasonal products, and new IT products (with business cycles, replaced by newer products), etc. 
\end{remark}
The simplest doubly stochastic process (sometimes called a mixed Poisson process) has a single random variable $\theta$ with $X'(t) =  X(\theta t),$ where $\{ X(t), t \geq 0\}$ is a Poisson process with $\lambda = 1$.
Given $\theta$, $X'$ s a Poisson process of constant rate $\lambda = \theta$, but $\theta$ is random (unobservable, typically). If $\theta$ is a continuous random variable with p.d.f. $f(\theta)$, the marginal distribution is as follows.
$$P(X'(t) = k) = \int_0^\infty \frac{(\theta t)^k e^{-\theta t}}{k!}f(\theta)d\theta.$$

\subsection{Some important random behaviors regarding transaction interarrival times in a Poisson process}\label{inter}
It is well known that inter-arrival times, let's say $S_0, S_1, \dots$, are i.i.d. exponential random variables with parameter $\lambda$, where $\lambda$ is the parameter of the Poisson process.
Let $W_n$ denote the time of occurrence of the $n$th event (setting $W_0 = 0$). The differences $S_n = W_{n+1} - W_n$ are the duration that the Poisson process sojourns in state $n$. 
See Figure \ref{fig1} for description, where $S_0, S_1, \dots$ denote the inter-arrival times in a Poisson process and $W_1 = S_0, W_2 = S_0 + S_1, \dots$ represent the occurrence times of the random events in a Poisson process.

\begin{figure}[ht!]
\begin{center}
\includegraphics[width=0.80\textwidth]{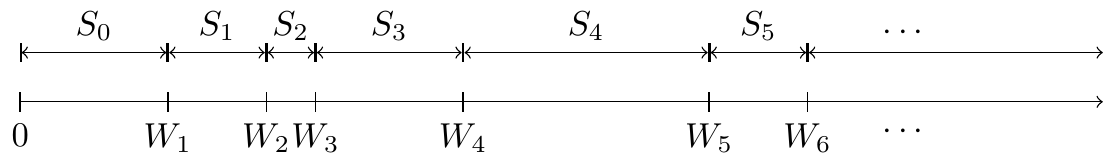}
\end{center}
\caption{Interarrival times of transactions - exponential random variables in the Poisson process.}\label{fig1}
         \end{figure}

We present the following theorems without proofs (see some stochastic models literature for more details).
\begin{theorem}
Conditioned on $N(t) = n$, i.e., $N(t) = N((0, t)) =$  number of events in $(0, t)$, the random variables $W_1, \dots, W_n$  have joint p.d.f.:
$$g(t_1, \dots, t_n) = \left\{\begin{array}{ll} \displaystyle \frac{n!}{t^n}&\text{if } 0 \leq t_1 \leq \dots \leq t_n\\ [2ex] 0 & \text{ elsewhere} \end{array}\right.$$

\end{theorem}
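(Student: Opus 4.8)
The plan is to derive the conditional joint density directly, by pushing the known law of the inter-arrival times through the change of variables to the arrival times and then conditioning on $\{N(t)=n\}$. Section~\ref{inter} already records that the inter-arrival times $S_0, S_1, \dots$ are i.i.d.\ exponential with parameter $\lambda$, so the joint density of $(S_0, \dots, S_n)$ is $\lambda^{n+1} e^{-\lambda(s_0 + \cdots + s_n)}$ on the positive orthant $\{s_i > 0\}$. This is the only probabilistic input needed besides the Poisson probability formula already stated.

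First I would map $(S_0, \dots, S_n)$ to the arrival times $(W_1, \dots, W_{n+1})$ via $W_k = S_0 + \cdots + S_{k-1}$. This transformation is linear with unit Jacobian and carries the positive orthant onto the ordered region $\{0 < w_1 < \cdots < w_{n+1}\}$, while $s_0 + \cdots + s_n = w_{n+1}$. Hence the joint density of $(W_1, \dots, W_{n+1})$ is $\lambda^{n+1} e^{-\lambda w_{n+1}}$ on that region. The crucial modeling step is to express the conditioning event correctly: $\{N(t)=n\}$ is precisely $\{W_n \le t < W_{n+1}\}$, meaning exactly $n$ points fall in $(0,t]$ and the $(n+1)$-st arrival has not yet occurred. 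To recover the joint law of $(W_1,\dots,W_n)$ on this event I would integrate the $(n+1)$-variable density over $w_{n+1}\in(t,\infty)$, the constraint $w_{n+1}>w_n$ being automatic since $w_n\le t$; this yields the sub-density $\int_t^\infty \lambda^{n+1} e^{-\lambda w_{n+1}}\,dw_{n+1} = \lambda^n e^{-\lambda t}$ on $\{0 < w_1 < \cdots < w_n \le t\}$.

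Finally I would divide by $P(N(t)=n) = (\lambda t)^n e^{-\lambda t}/n!$, the Poisson probability recorded earlier in the excerpt. Both $\lambda^n$ and $e^{-\lambda t}$ cancel, leaving $g(t_1,\dots,t_n) = n!/t^n$ on the ordered simplex $0 \le t_1 \le \cdots \le t_n \le t$ and zero elsewhere, which is exactly the claimed density; equivalently, it is the joint density of the order statistics of $n$ i.i.d.\ Uniform$(0,t)$ variables, the standard interpretation of the result.

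The main obstacle is the bookkeeping of the conditioning event rather than any hard computation. One must resist simply writing the density of $(W_1,\dots,W_n)$ in isolation, because conditioning on $N(t)=n$ imposes not only that the first $n$ arrivals precede $t$ but also that \emph{no further} arrival falls in $(W_n,t]$. Carrying the extra variable $W_{n+1}$ and integrating it out over $(t,\infty)$ is precisely what encodes this ``no-more-arrivals'' requirement and makes the exponential and $\lambda$ factors cancel cleanly against the normalizing probability.
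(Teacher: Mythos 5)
Your derivation is correct: the change of variables from inter-arrival times to arrival times, the identification of $\{N(t)=n\}$ with $\{W_n \le t < W_{n+1}\}$, the integration of $W_{n+1}$ over $(t,\infty)$ to get the sub-density $\lambda^n e^{-\lambda t}$ on the ordered region, and the normalization by $P(N(t)=n)=(\lambda t)^n e^{-\lambda t}/n!$ all check out, yielding $n!/t^n$ on $0 \le t_1 \le \cdots \le t_n \le t$. Note that the paper itself gives no proof of this theorem (it is explicitly stated ``without proofs,'' deferring to the stochastic-models literature), so there is nothing to compare against; your argument is the standard textbook derivation that the paper implicitly cites, and it correctly handles the one subtle point, namely that conditioning on $N(t)=n$ requires carrying $W_{n+1}$ to encode the ``no further arrivals in $(W_n, t]$'' constraint rather than working with $(W_1,\dots,W_n)$ alone.
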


The above theorem tells us that if conditioned on the number of events up to $t$, i.e., $N(t) = n,$ then the $n$ points behave as $n$ independent random points, each chosen from $(0, t)$, according to uniform distribution.

\begin{theorem}\label{thm2}
Probability density function of sum of $n$ independent, exponentially distributed random variables with the same parameter $\lambda$: $X = X_1 + \dots + X_n,$ where $X_i, i=1, \dots, n$ have the p.d.f. $f(x) = \lambda e^{-\lambda x}, x\geq 0.$ Let $f_n(x)$ be the p.d.f. of $X$. Then, for $n=1, 2, \dots$, we have
$$f_n(x) = \frac{\lambda^n x^{n-1} e^{-\lambda x}}{(n-1)!}, x \geq 0.$$ 
\end{theorem}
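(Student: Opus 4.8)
The plan is to prove the claim by induction on $n$, using the standard fact that the density of a sum of two independent random variables is the convolution of their individual densities. Because each $X_i$ is supported on $[0,\infty)$, the partial sum is supported there as well, so every convolution integral over $\mathbb{R}$ collapses to the finite interval $[0,x]$ for $x \geq 0$.

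For the base case $n=1$, the asserted formula reads $f_1(x) = \lambda^1 x^0 e^{-\lambda x}/0! = \lambda e^{-\lambda x}$, which is exactly the given exponential density; so the statement holds trivially. For the inductive step I would assume the formula for some $n \geq 1$ and write $X_1 + \cdots + X_{n+1} = Y + X_{n+1}$, where $Y = X_1 + \cdots + X_n$ has density $f_n$ by the inductive hypothesis and $X_{n+1}$ is independent of $Y$ with density $f_1$. Independence then gives the density of the sum as the convolution
\begin{equation}
f_{n+1}(x) = \int_0^x f_n(y)\, f_1(x-y)\, dy = \int_0^x \frac{\lambda^n y^{n-1} e^{-\lambda y}}{(n-1)!}\, \lambda\, e^{-\lambda(x-y)}\, dy.
\end{equation}

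The key simplification is that the two exponential factors combine to $e^{-\lambda x}$, which is constant in the integration variable $y$ and so factors out, leaving the elementary integral $\int_0^x y^{n-1}\, dy = x^n/n$. Collecting the constants and using $n \cdot (n-1)! = n!$ yields $f_{n+1}(x) = \lambda^{n+1} x^n e^{-\lambda x}/n!$, which is precisely the claimed formula with $n$ replaced by $n+1$, closing the induction.

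There is no genuinely hard step here: the computation is routine, and the only point demanding any care is the bookkeeping of the support, namely ensuring that the convolution over all of $\mathbb{R}$ correctly reduces to $[0,x]$ because both densities vanish on the negative half-line. An equally short alternative would be to compute the moment generating function of a single exponential, $M(t) = \lambda/(\lambda - t)$ for $t < \lambda$, use independence to get $(\lambda/(\lambda-t))^n$ for the sum, and recognize this as the transform of the Erlang density in the statement. I would nonetheless favor the convolution argument, since it is self-contained and produces the density directly without requiring a uniqueness-of-transform appeal or an inversion step.
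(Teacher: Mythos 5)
Your proof is correct: the induction is properly anchored at $n=1$, the convolution in the inductive step is the right tool, the reduction of the integration range to $[0,x]$ is justified by the supports, and the computation $\int_0^x y^{n-1}\,dy = x^n/n$ together with $n\cdot(n-1)! = n!$ closes the induction cleanly. Note, however, that the paper itself offers no proof to compare against --- it states this theorem (and the one preceding it) explicitly ``without proofs,'' deferring to the stochastic-models literature --- so your argument is not an alternative to the paper's route but a self-contained filling of a gap the authors chose to leave open; the convolution induction you chose, rather than the moment-generating-function shortcut you mention, is indeed the preferable one here precisely because it avoids any appeal to uniqueness of transforms.
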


It follows that another derivation of probability distribution for interarrival times of transactions.
Let $F_n(t) = P(S_0 + \dots +S_{n-1} \leq t)$. Then we have, using Theorem \ref{thm2},
$$F_n(t) = \int_0^t f_n(u) du = \int_0^t  \frac{\lambda^n u^{n-1} e^{-\lambda u}}{(n-1)!} du.$$
Furthermore, for $n \geq 1$,
\begin{equation}
\begin{array}{lcl}
P_n(t) = P(N((0, t))=n) &=& F_n(t) - F_{n+1}(t)\\[2ex]
&=& \displaystyle \int_0^t  \frac{\lambda^n u^{n-1} e^{-\lambda u}}{(n-1)!} du - \int_0^t  \frac{\lambda^{n+1} u^{n} e^{-\lambda u}}{n!} du\\[2ex]
&=& \displaystyle \left[\frac{\lambda^n u^{n} e^{-\lambda u}}{n!}\right]_0^t+ \int_0^t  \frac{\lambda^{n+1} u^{n} e^{-\lambda u}}{n!} du - \int_0^t  \frac{\lambda^{n+1} u^{n} e^{-\lambda u}}{n!} du\\[2ex]
&=& \displaystyle \frac{(\lambda t)^n}{n!} e^{-\lambda t}.
\end{array}
\end{equation}
Note that $F_k(t)$ means the probability that there are at least $k$ transactions. 
It follows that $P_n(t) =  \displaystyle \frac{(\lambda t)^n}{n!} e^{-\lambda t}$ for $n=1,2, \dots$ and this implies $P_0(t) = e^{-\lambda t}.$
See our related application below.

\begin{example}[Expected total boxcoin in time $(0, t)$]
Suppose that transactions occur in the system according to a Poisson process. Upon making a transaction each user pay 1 boxcoin. Find the expected total sum collected in $(0, t),$ discounted back to time 0. Let $\beta$ be the discount rate. Note that the reason of discounting is to perform more accurate comparison, especially for different time periods.
\end{example}

In case of continuous compounding $\beta$ is divided by $n$ and then $n \rightarrow \infty$. The result is 
$$\lim_{n \rightarrow \infty} \left(1 + \frac{\beta}{n}\right)^n = e^\beta.$$ The discounting factor is its reciprocal value $e^{-\beta}$. Discounting from time $W_k$ back to initial time 0, the result is $e^{-\beta W_k}.$
We want to calculate the expected total boxcoin in a given time as follows.
\begin{equation}
\begin{array}{lcl}
\displaystyle E\left(  \sum_{k=1}^{X(t)} e^{-\beta W_k}  \right) &=& \displaystyle \sum_{n=1}^\infty E\left(  \sum_{k=1}^{X(t)} e^{-\beta W_k} \big| X(t) = n \right)P(X(t) = n)\\
											&=& \displaystyle \sum_{n=1}^\infty n E ( e^{-\beta W_1})P(X(t) = n)\\
											&=& \displaystyle \sum_{n=1}^\infty n \int_0^t \frac{1}{t} e^{-\beta u}du P(X(t) = n)\\
											&=& \displaystyle  \frac{1}{\beta t} (1- e^{-\beta t})      \sum_{n=1}^\infty n P(X(t) = n)\\
											&=& \displaystyle  \frac{\lambda}{\beta} (1- e^{-\beta t}).		
\end{array}
\end{equation}

It is well known that if $X_1, \dots, X_n$ are i.i.d. exponential random variables, then $X_1+\dots +X_n \sim \text{Gamma} (n, \mu),$ which has the p.d.f.
\begin{equation}\label{gamma}
f(x)  = \mu e^{-\mu x} \frac{(\mu x)^{n-1}}{(n-1)!},
\end{equation}
where $(n-1)! = \Gamma (n) $ because $\Gamma(n) = \int_0^\infty e^{-y} y ^{n-1} dy$ and $\Gamma(n) = (n-1)\Gamma(n-1)$.  Its expectation and variance are $E(X) =n/\mu, \text{Var}(X) = n/\mu^2,$ respectively. Note that the gamma distribution is log-concave, so we can formulate a convex optimization problem using some stochastic optimization techniques. \\

\noindent
{\bf Gamma distribution, a family member of logconcave distributions.}
If $\mu = 1$ of (\ref{gamma}), then the distribution is said to be standard. If $\xi$ has gamma distribution, then $n \xi$ has standard gamma distribution. Both the expectation and the variance of a standard gamma distribution are equal to $n$.  
An $m$-variate gamma distribution can be defined in the following way.  Let $A$ be the $m \times (2^m -1)$ matrix the columns of which are all 0-1 component vectors of size $m$ except for the 0 vector. Let $\eta_1, \dots, \eta_s, \ s=2^m -1$ be independent standard gamma distributed random variables and designate by $\eta$ the vector of these components. Then we say that the random vector
\begin{displaymath}
\xi = A \eta
\end{displaymath}
has an $m$-variate standard gamma distribution.

\subsection{Combining nonhomogeneous Poisson processes of transactions}\label{inflow}
Multiple transaction inflows into the system must be well analyzed in order to help monitor overall activity of the system.
Let us begin with the simplest case and extend to the generality. Let $X_1(t)$ and $X_2(t)$ be the two Poisson flows with parameter $\lambda_1$ and $\lambda_2$, respectively. We also assume that they are independent. Consider nonoverlapping intervals: $(t_0, t_1)$ and $(t_2, t_3).$ Then $X_1(t_1) - X_1(t_0)$ and $X_1(t_3) - X_1(t_2)$ are independent, and $X_2(t_1) - X_2(t_0)$ and $X_2(t_3) - X_2(t_2)$ are also independent. 
Since the two processes are also independent, the four random variables are independent. It follows that 
$X_1(t_1) - X_1(t_0) + X_1(t_3) - X_1(t_2)$ and $X_2(t_1) - X_2(t_0) + X_2(t_3) - X_2(t_2)$ are independent. Thus, we can say that $X(t) = X_1(t) + X_2(t)$ has independent increments. Note that
$$X(s+t) - X(s) = X_1(s+t) - X_1(s) + X_2(s+t) - X_2(s),$$
where $X_i(s+t) - X_i(s)$ has Poisson parameter $\lambda_it, i= 1, 2.$ Hence $X(s+t) - X(s)$ has Poisson distribution with parameter $(\lambda_1 + \lambda_2)t.$

The procedure is the same in the general case. We can unite arbitrary finite number of independent Poisson process: if $X_1(t), \dots, X_n(t) $ are independent Poisson processes with parameters $\lambda_1, \dots, \lambda_n$, respectively, then $X(t) = X_1(t)+ \dots + X_n(t)$ is a Poisson process with parameter $\lambda_1 + \dots + \lambda_n.$

\subsection{Transaction ``size", and the recursive formulas for the probability mass function of compound random variables}

Transactions are randomly occurring events. Since a number of events occurring in a fixed period of time, say $N$, is uncertain, it would be suitable to use a compound distribution modelling for a random sum $S =X_1+X_2+ \cdots +X_N$, where $N$ is a nonnegative integer-valued random variable.  
For the frequency of transactions, let $N(t)$ be a Poisson process with a fixed rate $\lambda$, for simplicity. 
It is not a problem, however, if the Poisson parameter $\lambda$ is also uncertain as discussed in the previous sections. 

If we need to consider the magnitude (or size) of transactions but their sizes are random (or unknown), it is suitable to use a compound Poisson process.
The following is well-known (\cite{gammapoisson}):
$$N \sim \text{Poisson} (\lambda) \text{, where }  \lambda \sim \text{Gamma} (r, p/(1-p)) \rightarrow N \sim \text{NegBinomial} (r, p),$$
which means that if the Poisson parameter $\lambda$ is uncertain due to the behavior of heterogeneous users, a gamma distribution may be suitable for capturing the $\lambda$ information. Then $N$ will have a negative binomial distribution. Negative binomial distributions can be used for our model with no problem, but we restrict ourselves to the compound Poisson distribution for this paper.

The frequency of transactions plays a central role and in our view, the size of transactions would also be very important to proper management of a DAG. For the transaction size, we let $X_1, X_2, \dots$ denote independent, identically distributed random variables (meaning the transaction sizes at the corresponding events), which are also independent of the Poisson process. (i.e., the random variables $N, X_1, X_2, \dots$ are mutually independent.)  
Then $S(t) = \sum_{k=1}^{N(t)} X_k, \forall t \geq 0$  (i.e., $S(t)$ is a compound random variable) means the aggregate transaction weight in time interval $(0, t)$.
It is well known that $E(S(t)) = \lambda t \mu, var(S(t)) = \lambda t (\sigma^2 + \mu^2)$ if $\mu = E(X_k), \sigma^2 = var(X_k), k = 1, 2, \dots$.

In order to calculate the distribution of aggregate transaction weights we use Panjer's recursion formula (see for details, e.g., \cite{recursion1}, \cite{recursion2}, \cite{recursion3}, \cite{insurance}, etc.).
For completeness, allow us to present here some basic notions and related formulas. 
The size $X_k$ can be fitted with continuous or discrete distributions, but the continuous distribution needs to be discretized for the use of Panjer's recursion formula. This is a great fit for our model because every transaction has a positive integer (its weight).
For practical discretization methods we refer the reader to the literature, e.g., \cite{recursion4} and \cite{recursion3}.
We restrict our attention to the case that the transaction size $X_k$ follows a discrete distribution on the positive integers  because a continuous variable needs to be discretized to use the recursion formula and also because any positive integer-valued variables can easily be scaled to the suitable size. Then $S$ is also distributed on the nonnegative integers, and the probability mass function of the compound process $S(t)$ can be calculated recursively by the following well-known recursion:
\begin{equation}
\displaystyle P(S(t) = k) = \frac{1}{k} \lambda \sum_{i=1}^k i P(X_1  =i) P(S(t) = k-i).
\end{equation}

\begin{figure}[ht!]
\begin{center}
\includegraphics[width=0.70\textwidth]{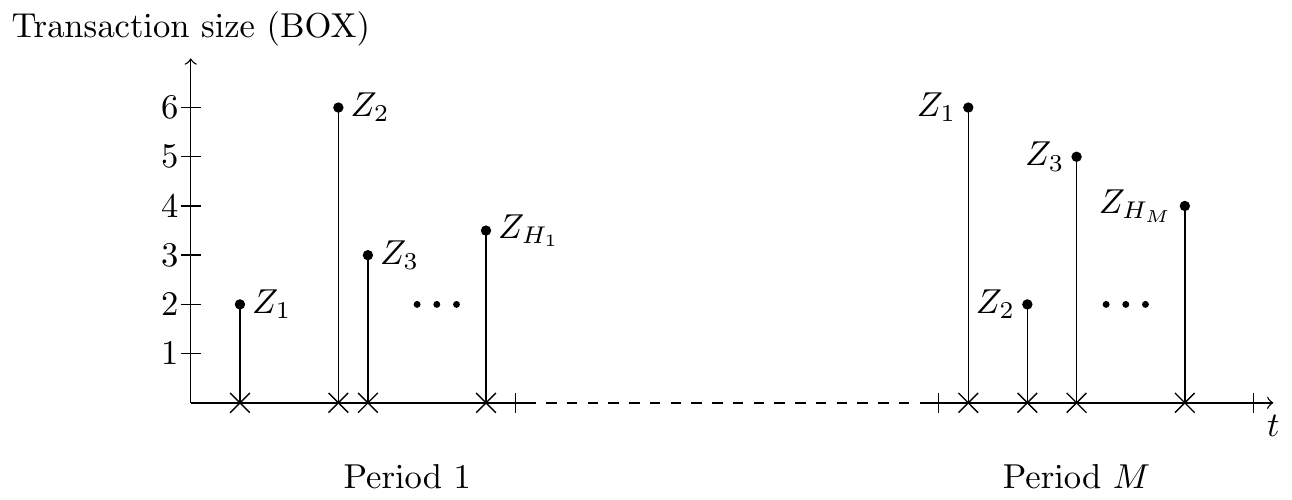}
\end{center}
\caption{Illustration of Compound Poisson Distributed Transactions. $H_j$ is the number of events incurred over the period $j$ and the unit transaction size is one boxcoin.}\label{fig2}
         \end{figure}

Given a Poisson process $N_j(t)$ and nonnegative integer-valued random transaction size $X_{jk}$'s for the upcoming time periods $j$, $j =1, \dots, M$, we can write:
\begin{equation}
\begin{array}{l}
\displaystyle P(N_j (t) = x) = \frac{(\lambda_j t)^x}{x !} e^{-\lambda_j t}, \ x= 0, 1, \dots ; \ j = 1, \dots, M\\[1.5ex]
S_j(t) = X_{j1} + X_{j2} +\dots + X_{jN_j(t)}, \ j = 1, \dots, M,
\end{array}\end{equation}
where $X_{jk}$ is the $k$th transaction size in the $j$th process. See Figure \ref{fig2} for description.

Let $f_j(x) = P(S_j(t) = x)$, where $x$ is a positive integer. Then, the recursion formula for the p.m.f $f_j(x),  j=1, \dots, M$ can be written as:
\begin{equation}
\begin{array}{l}
\displaystyle f_j(x) = \frac{\lambda_j}{x} \sum_{k=1}^x kp_j(k)f(x-k), \ x=1,2,\dots, \\[3ex]
\displaystyle f_j(0) = e^{-\lambda_j}, \ j=1,\dots,M,
\end{array}
\end{equation}
where $p_j(k) = P(X_{j1}  = k).$

\section{Models and discussions for boxdollar}\label{bxd}

\subsection{Background of boxdollar}

Boxdollar (BXD), a value-preserving medium of payment, store of value, or stablecoin, will be used primarily as a useful medium of exchange as well as a dependable store of value. The value of BXD is at a fixed exchange rate one to one to a fiat currency, hence boxdollar is a fiat money-backed asset and keeps its stability. The goal of analytical models of boxdollar is twofold { currency conversion and currency rebalancing problems. These models are well known, and therefore we list up a few models in this section. A variety of useful optimization models for financial applications, which can be found in the operations research literature (see, e.g., \cite{stochastic}, \cite{stochasticpflug}, \cite{stochastic01}, etc.).

P2P payments are among impactful applications due to the similarity of operations and transactions on a decentralized system structure: They have chains of digital signatures of asset transfers, P2P networking, etc. Yet, most of P2P platforms are not fully utilizing the ``purely" distributed system. There are numerous of P2P platforms which made it easier and faster to make payments, send or receive funds. They provide us with clearly better solutions than traditional systems in most aspects, even with lower fees and costs. Still, their pricing and operating models could be more efficient and secure by the blockchain technology, which may lead us to find a more perfect solution.

Most of existing cryptocurrencies are not accepted in regular market places due to their high volatility. Holding some cryptocurrency is extremely risky, so the current forms of cryptocurrencies are not suitable to use in our daily life. It follows, therefore, the price stabilization is the key to the application of a cryptocurrency. A cryptocurrency, with guaranteed price stability, will clearly be beneficial. The electronic commerce (or e-commerce) industry is where one can find it suitable to apply the blockchain technology and cryptocurrency to its existing P2P payment system. Most of successful e-commerce marketplaces have flexible and adaptable structure, and more importantly, with less regulation but higher potential. For this reason, the e-commerce is a great t for incorporating the ideas of cryptocurrency and its related applications. At this moment the e-commerce market transactions are mostly based on at currencies, while there has been a serious need of instantaneous transactions and borderless transfer-of-ownership. Yet, there is a variety of difficulties in the international transfer, which often challenges tracking, and normally requires more time (than domestic) with higher cost to exchange, various fees depending on a sending or receiving financial institution.

For instantaneous transactions, possible real-life applications include supply chain in drop-shipping, manufacturing management, insurance claim payment process, tax collection processes, and many others. The existing models are not well-suited for our daily use on such applications, and this inspired us of a new stablecoin.

\subsection{Currency conversion problem}

We may encounter situations of needing to exchange boxdollar (BXD) to some other fiat currency (possibly to multiple currencies) -- observation of nontrivial movement in exchange rates, our market participants' requests, and/or regular portfolio (multicurrency accounts) rebalancing, etc. When it comes to currency exchange, the loss of stabilization is what we're most concerned about. This means we should always maximize the USD value (i.e., the amount of BXD) of desired positions in converting to other currencies. It is very important that we obtain the best currency conversion strategy to keep up with our stable Box Dollar models. Such currency conversion problems (with possibility of  arbitrage detection) are well known in operations research literature.

\begin{figure}[ht!]
\begin{center}
\includegraphics[width=0.70\textwidth]{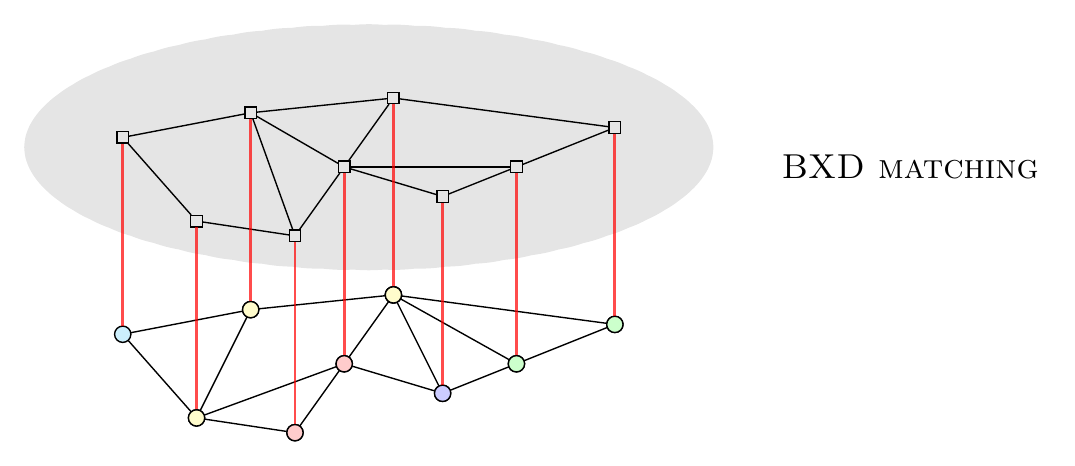}
\end{center}
\caption{Stable Coin System Topology - BXD matching with fiat currencies: CNY (red), JPY (yellow), KRW (blue), HK\$ (green), NT\$ (purple)}\label{fig0}
         \end{figure}

In what follows, we present optimization problems suitable to rebalance the combination of multinational currencies. Their optimal solutions (i.e., optimal weights on currencies) will be employed for a conversion problem.
Note that currency rebalancing will take place only if there is a guaranteed capital gain. 

\subsection{ Currency rebalancing -- Kataoka's Problem: the safety first model}\label{stp2}

Typically random variables appear only on the right hand sides of constraints of stochastic program. The following problem was formulated by \cite{kataoka} and it has stochastic constraints where their technology matrix has random variables. 
Many real life applications can be solved, but we restrict ourselves to our problem in this paper. 
Let $x_i$ denote the weight of the $i$th currency,$ i=1, \dots, n$ and the random vector $\xi$ consists of components meaning the return on holding the corresponding currency. Our model is the following:
\begin{equation}\label{k2}
\begin{array}{l}
\max d\\
\text{subject to}\\
\displaystyle P\left( \sum_{i=1}^n \xi_i x_i \geq d \right) \geq p\\
\sum_{i=1}^n x_i = 1, x \geq 0,
\end{array}
\end{equation}
where we assume that $\xi = (\xi_1, \dots, \xi_n)^T$ has an $n$-variate normal distribution with
$$\mu_i = E(\xi_i), \ i=1, \dots, n, \mu = (\mu_1, \dots, \mu_n)^T,$$
$$C = E{(\xi - \mu)(\xi- \mu)^T}.$$

Note that $p$ and $M$ are constants and the decision variables are $x_1, \dots, x_n, d$. 
Using some mathematical steps (see, e.g., \cite{bounding5}) the formulation (\ref{k2}) can be written up as
\begin{equation}\label{kataoka}
\begin{array}{l}
\displaystyle \max \left\{ \mu^Tx + \Phi^{-1} (1-p) \sqrt{x^T Cx}  \right\}\\
\text{subject to}\\
\sum_{i=1}^n x_i = 1, x \geq 0.
\end{array}
\end{equation}
Since $C$ is a positive semidefinite matrix, the function $\sqrt{x^T C x}$ is convex. When the probability level is set $p \geq 0.5$ (i.e., $\Phi^{-1} (1-p) \leq 0$), the objective function is concave so (\ref{kataoka}) turns out to be a convex programming problem.\\

\subsection{Currency rebalancing -- Conditional Value-at-Risk, i.e., minimization of risk}\label{stp2}

Optimization problems using Conditional Value-at-Risk (CVaR) have been researched and used in practice, so allow us to present the formulation of CVaR.
We recommend the readers to the literature, e.g.,  \cite{c1var},  \cite{risktom}, \cite{c2var} and the references therein.
In order to find the optimal portfolio using the classical CVaR, let $\xi$ denote the loss vector and $x$ the currency weights (the decision vector) for a portfolio of $n$ currencies.
Then $\xi^Tx$ means the loss of the currency holding, and the following model can be written:
\begin{equation}\label{cvar100}
\begin{array}{l}
\displaystyle \min_{x, a}  \left( a + \frac{1}{1-p} E\left([\xi^T x - a]_+\right) \right)\\
\begin{array}{ll}
\text{subject to} & \mu^T x \leq \mu_0\\
& \displaystyle \sum_{i=1}^n x_i \leq 1\\
& x \geq 0,
\end{array}
\end{array}
\end{equation}
where $\mu = E(\xi)$ and $\mu_0$ is some constant. Note that $a =\text{VaR}_p(X)$ at optimality and the optimal objective value is the smallest among all values of $E\left(\xi^T x \ | \ \xi^T x \geq \text{VaR}_p(\xi^Tx)\right)$ with $x$ in the feasible set of the constraints. Also note that we can write $\mu_0 = -R$ where $R$ means the minimum required return for the portfolio.

 It is well known that the following LP is a discrete version of (\ref{cvar100}) with $\mu_0 = -R$,
\begin{equation}\label{lp1}\begin{array}{l}
\displaystyle \min a + \frac{1}{K(1-p)} \sum_{k=1}^K u_k\\
\begin{array}{ll}
\text{subject to}&  a -x^T \mathbf{y_k}  + u_k \geq 0, \ k =1, \dots, K\\
&u_k \geq 0, \ k =1, \dots, K\\
& - x^T\mu \geq R\\
&\sum_{i=1}^n x_i \leq 1\\
&x_i \geq 0, i =  1, \dots, n,\\
\end{array}
\end{array}\end{equation}
where $\{ \mathbf{y_1}, \dots, \mathbf{y_K} \}$ denotes $K$ i.i.d. samples of the loss random vector $\xi \in R^n$. (\ref{lp1}) can equivalently be written in the following matrix notation:

\begin{equation}\label{lp2}\begin{array}{l}
\displaystyle \min a + \frac{1}{K(1-p)} \sum_{k=1}^K u_k\\
\text{subject to}\\[2ex]
\left( \begin{array}{ccc}
1_K & -{\bf Y}  & I_K \\
0    &  -\mu^T    & 0 \dots 0\\
  0 & -1^T_n & 0 \dots 0 
\end{array}     \right)  \left( \begin{array}{c}a \\ x \\ u \end{array} \right)
 \geq \left( \begin{array}{c}   0 \\ R \\ -1        \end{array} \right)\\
 u \geq 0, x \geq 0,
\end{array}\end{equation}
where $I_K$ means $K \times K$ identity matrix, $Y$ is a $K \times n$ matrix with samples of the loss random vector $\xi \in R^n$, and $1_K$ is a $K \times 1$ all ones vector.

\subsection{More general formulations}\label{stp3}
Let $x$ denote the vector of currency weights with its related cost vector $c$, and $\xi$ the random vector with an estimated distribution; the matrices $A$ and $T$ are present and future constraints, respectively. Then a more general stochastic programming model is formulated in the following way:

\begin{equation}\begin{array}{rll}
\min & c^T x  & \\
\text{subject to}  & A x  =  b,  x \geq 0 \\[1ex]
        & P(T x \geq \xi)  \geq  p, \\[1.2ex]
\end{array} \label{1.1.31}
\end{equation}
where $p$ is a fixed probability chosen by ourselves. In practice $p$ is near 1, for example we may choose $p$ as 0.8, 0.9, 0.95, 0.99, depending on our reliability requirement, i.e., in what proportion of the cases do we want the inequality $T x \geq \xi$ to be satisfied. Let $T_i$ denote the $i$th row of matrix $T$ and $\xi_i$ the $i$th component of random vector $\xi$. \\

There is a simplification possibility for the problem (\ref{1.1.31}). Instead of $P(Tx \geq \xi) \geq p$ we take 
$$E(\xi_i - T_i x \ | \ \xi_i - T_i x > 0) \leq d_i, i =1, \dots, r.$$
 If the function $g_i(z) = E(\xi_i - z \ | \ \xi_i - z > 0)$ is decreasing, then $g_i(T_i x) = E(\xi_i - T_i x \ | \ \xi_i - T_i x > 0) \leq d_i$ is equivalent to $T_i x \geq g_i^{-1}(d_i), i =1, \dots, r$ and the whole problem becomes:
\begin{equation}\begin{array}{rll}
\min & c^T x  & \\
\text{subject to}  & A x  =  b,  x \geq 0 \\[1ex]
        & T_i x \geq g_i^{-1}(d_i), i =1, \dots, r, \\[1.2ex]
\end{array} \label{logsim}
\end{equation}
which is an LP. \\

\begin{remark}[Related measures of violation]
In reliabiility theory and insurance problems $E(\xi - t \ | \ \xi - t > 0)$ is called ``Expected Residual Lifetime." It is natural that it is a decreasing function of $t$, but it is not always decreasing. (i.e., there are probability distributions for which it is not true, e.g., lognormal, Pareto if $t \geq 1$, etc.) If $\xi$ has a {\bf logconcave p.d.f.,} then $E(\xi - t \ | \ \xi - t > 0)$ is a decreasing function of $t$. \end{remark}

Note that the problem (\ref{1.1.31}) is a special case of the following general formulation:
\begin{equation}\label{1.1.2}
\begin{array}{l}
\displaystyle \min  h(x) \\
  \text{subject to}  \\
h_0(x) = P(g_1(x, \xi) \geq 0, \dots, g_r(x, \xi) \geq 0) \geq p_0\\
h_1(x) \geq p_1, \dots, h_m(x) \geq p_m,
\end{array}
\end{equation}
where $x$ is the decision vector, $\xi$ is a random vector, $h(x), h_1(x), \dots, h_m(x)$ are given functions, $0<p_0 \leq1, p_1, \dots, p_m$ are given numbers.\\

\begin{remark}[Convexity of the problem (\ref{1.1.2})]
Any logconcave function is quasi-concave, hence if $\xi \in R^r$ has a continuous distribution and logconcave density then $h_0(x)$ in problem (\ref{1.1.2}) is quasi-concave. Hence, $h_0(x)$ allows for the convex programming property. If the objective function $h$ is convex and we assume that $h_1, \dots, h_m$ are quasi-concave, then the problem is indeed convex.
\end{remark}

\begin{remark}
If the random vector $\xi$ has independent components $\xi_1, \dots, \xi_r$, then
\begin{equation}
P(Tx \geq \xi) = \displaystyle \prod_{i=1}^r P(T_i x \geq \xi_i) = \prod_{i=1}^r F_i(T_i x), 
\end{equation}
where $F_i$ is the c.d.f. of $\xi_i$ for $i=1, \dots, r$. The probabilistic constraint takes the form: 
\begin{equation}
\prod_{i=1}^r F_i(T_i x) \geq p.
\end{equation}
$x$ is feasible if this inequality, in addition to $Ax = b, x \geq 0$, is satisfied.
\end{remark}

Note that a variety of optimization applications can be found in \cite{stochastic} and the author's numerous excellent scientific articles.

\subsection{Valuation of boxdollar}

The value of boxdollar ($\mathcal{B}_0$) is initially pegged against a fiat currency ($\mathcal{M}_0$) held in the custodian vault. The deposited collateral will generate interest ($r$). Thus, according to the cost-of-carry theory, an arbitrage-free condition entails transfer of interest to the users of boxdollar, otherwise boxdollar will be discounted as much as to compensate for opportunity costs as follows:

\begin{equation}
\mathcal{B}_0=\mathcal{M}_0 \Longrightarrow \mathcal{B}_t=e^{rt}\mathcal{M}_0.
\end{equation}
However, as the users of boxdollar benefit from its convenience, the yield of convenience ($\delta$) can partially or utterly defray foregone interest rate or opportunity costs. In that case, the value of boxdollar in the future ($\mathcal{B}_t$) can maintain the initially pegged amount ($\mathcal{M}_0$) as such:

 \begin{equation}
\mathcal{B}_0=\mathcal{M}_0 \Longrightarrow \mathcal{B}_t=e^{(r-\delta)t}\mathcal{M}_0 \approx \mathcal{M}_0 \text{ if } \delta \approx r.
\end{equation}

\subsection{Boxdollar as a regional currency and a quasi-fiat money}

Boxdollar can be used as a regional currency with collateral on tax revenues. This way, the regional government can wield a multiplier effect for increased transactions of goods and services in the community. The key to succeeding a liquid, well-trusted local money lies in the trust among community members and the eventual guarantee of conversion to the fiat currency by the local government. In a well-coordinated game-theoretic setting the agreement among economic agents can create value, as the local currency can boost the regional economy as an ``outside'' money (\cite{goodhart}). A municipality with stable tax revenues can be a ideal place to adopt a cryptocurrency as its local currency. For example, Ithaca, NY--where ``Ithaca hours'' is circulated as a local money for shopping in local stores--can consider converting the conventional, note-based hours into a cryptocurrency for wider and frictionless usages. Those local governments with liquid, local cryptocurrencies can later enter into mutual agreements to render their currencies compatible. As mutual trust grows and strengthens, the locally and inter-municipally used cryptocurrencies can become a quasi-fiat money across the nation.

Suspicions as per whether a cryptocurrency can become a fiat money is not a new question as the history of finance has witnessed analogous events throughout the time. DLTs utilizing computers scattered around the globe to revise the same list of transactions appeal to finance's fundamental objective of ascertaining a transaction between two mutually-remote parties trustworthy to each other. The predecessors of DLTs have done similarly as banks in England established a clearinghouse in the late 18th Century to revamp the system of managing different ledgers of common transactions (\cite{kindleberger}). These banks issued new bank notes (``wagon-way through the air'') based on the collateral of gold coins (``earth-bound highway'', \cite{smith}). In another latter century, those who wanted to supply more money (``currency school'') clashed with their counterpart whose policy goal was to put the gushing bank notes on hold (``banking school''), which was a precursor of ``forking'' on cryptocurrencies. 

Through trials and errors, inflation control as the foremost policy target of central banking has confirmed a general equilibrium-theoretic prediction of the role of money as a num\'{e}raire of the modern economy such that goods and services can be valued at their absolute prices, not the relative prices or exchange ratios of a batter economy. Governance is the key to maintaining the banking system based on a currency. For a fiat money, it is the role of the central bank to grant undoubting trust on the money as a medium of payment and exchange, and a store of value. For a cryptocurrency, the governance of its ecosystem is the protocol of transaction confirmation. For our dual ledger-keeping boxchain, its ``2+2'' doubly-secure consensus protocol is the crux of governance or security which emboldens trust among the participants across the system of utility and incentives.

\subsection{The dual cryptocurrencies: boxcoin and boxdollar in the value chain}\label{description}

Boxdollar, a stablecoin will be used primarily as a medium of exchange. The participants will exchange their local fiat currencies to boxdollar based on a real time conversion rate to USD. This is quite simple but strong enough to obtain desirable level of trust and security -- one can put down a deposit with a US Dollar for every boxdollar issued (i.e., 1 to 1) so that the  boxdollar is asset backed and keeps its stability. 
There will be a unique and efficient digital wallet, called the boxpay-wallet. Using the boxpay-wallet, the market participants will see all previous transactions, conversion records, and current balances of boxdollar and boxcoin. If registered, fiat currencies in his or her bank accounts can be seen as well. The real time exchange rates of boxdollar to boxcoin, boxcoin to boxdollar and ones among all related fiat currencies are also presented on the boxpay-wallet.

The boxpay protocol consists of the most efficient dual currencies -- boxcoin and boxdollar on public and private networks, respectively. In what follows we list up some of the notable functions of boxdollar and boxcoin. Boxdollar has the following desirable features:
\begin{itemize}
\item Reliability and security
\item Medium of exchange 
\item Store of value
\item Diversification of currency holdings
\item Transferability (e.g., efficient transfer (cross-border) in P2P transactions, among merchants and customers, etc.)
\item Tracking all previous transactions
\end{itemize}

Boxcoin is mainly for making micropayment on public network (see Section \ref{box} for details). There are some specifics for the use of boxcoin, the key functions and benefits of boxcoin include the following:
\begin{itemize}
\item Rewards and Fees (as presented in Section \ref{incentivesystem})
\item Possibility of capital appreciation
\end{itemize}

Let's see from a standpoint of a buyer, say Alice. After she exchanged her local currency to  boxdollar (BXD) (based on a conversion rate to a USD), the calculated amount of BXDs will be stored in her boxpay-wallet and be ready to use. If Alice wants to use her local currency Chinese Yuan (CNY) to buy an item priced at BXD 500, first thing to do is to spend CNY 3,440 to receive BXD 500 into her boxpay-wallet. (Buying 1 US dollar for Chinese Yuan  requires CNY 6.88 using the conversion rates as of 1 PM (UTC -4), 8/14/18.) Note that it might be a good idea to put down more CNY to get more BXD if the USD appreciation is expected. (For example, the rate 1 USD = CNY 6.88 at the moment may later be changed to 1 USD $>$ CNY 6.88. It's been a while the USD gets more valuable compared to the other currencies.) She may have multiple items in her to-buy list even if she does not want to buy them at the moment.  After transaction completed the seller will be able to see the BXDs from the boxpay-wallet, ready to use in the e-marketplace (e.g., for shipping cost) or get an exchange for some local fiat currency (or multiple currencies) as needed. 


\begin{remark}[The boxpay wallet and smart insurance]

The boxpay wallet is now being developed,  and is a file (a simple database) of the digital keys, which are completely independent of the protocol. This will come with a smart insurance capability, systematically identifying claims to report.
\end{remark}





\section{Concluding remark}\label{bond}

We are in transition to a cashless society and, for quite some time, blockchain has been around as one of the most thriving technologies with a bright future ahead. Blockchain is a decentralized-distributed system which has revolutionized our perspective of the world. Many forms of digital currency have already been used in a variety of ways and places, e.g., in the online marketplaces and mobile banking systems, where it is now normal to use a phone number or an email address instead of a bank account. Although there are numerous benefits the digital currency has to offer, many cryptocurrencies have very low trading volumes as they failed to build their own ecosystem.

In this paper we introduce our new way of thinking on approach to DLT -- the dual ledger-keeping algorithm of the chain of antichains (boxchain). Our DLT uses the original transactions occurred in the DAG-based primal space. Using the dual ledger-keeping algorithm a chain of antichains is constructed in real time in the dual layer that synchronously reflects transactions in the DAG. This chain of antichains has resembles the structure of blockchain. Using such dual-layer framework, we take desirable aspects from both blockchain and a DAG-based distributed network system. We consequently arrive at a powerful consensus protocol which makes the final confirmation feasible with great efficiency. Our dual approach proposes two distinct cryptocurrencies: boxdollar and boxcoin. Our stablecoin boxdollar is  pegged to a fiat money, anticipating the use as a medium-of-exchange as well as a reliable store-of-value. Another cryptocurrency, called boxcoin, is a crucial component to keep up a purely distributed peer-to-peer network, with capability to run our unique incentive system at its core, indispensable to an effective DLT.

We presented both deterministic and stochastic aspects of the dual ledger-keeping. Illustrative and numerical examples are also presented. Our new algorithms were discovered by the use of discrete mathematics as well as probability models. This paper is focused on the mathematical and analytical foundations of a new digital ecosystem. We hope our new ideas for DLT will be beneficial to the readers and be helpful to improve decentralized-distributed network systems and their real-life applications.

\section{Acknowledgements}
The authors would like to express sincere gratitude to Professor Endre Boros for his kind critics, which inspired the authors to come up with a consensus algorithm. 
The authors would also like to thank the Box Protocol team for their comments and suggestions. 
 The first author dearly misses his academic father Professor Andr\'{a}s Pr\'{e}kopa (1929-2016).
 We also thank Daeje Chin and Joung Hwa Choi.

\bibliographystyle{ormsv080}      
\bibliography{MAVaR_refs_aor}   

\end{document}